\DeclareMathAlphabet{\mathpzc}{OT1}{pzc}{m}{it}
\newtheorem{example}{Example}
\newcommand{\bex}{\begin{example}}
\newcommand{\eex}{\end{example}}
\newcommand{\reals}{\ensuremath{\mathbb{R}}}
\newcommand{\naturals}{\ensuremath{\mathbb{N}}}
\newcommand{\prob}{\ensuremath{\mathbb{P}}}
\def\0{{\tt 0}} 
\def\1{{\tt 1}} 
\def\?{{\tt *}} 
\renewcommand{\mid}{\,|\,}
\newcommand{\E}{{\ensuremath{\tt E}}}
\newcommand{\de}[1] {x_{#1}}
\newtheorem{propo}{Proposition}[section]
\newtheorem{lemma}[propo]{Lemma}
\newtheorem{thm}[propo]{Theorem}
\def\F{{\mathbb F}}
\def\dim{{\rm dim}\, }
\def\Code{{\mathfrak C}}
\def\rank{{\rm rank}}
\def\H{{\mathbb H}}
\def\ve{\varepsilon}
\def\Rho{P}
\def\sTV{\mbox{\tiny \rm TV}}
\def\Ball{{\sf B}}
\def\Tree{{\sf T}}
\def\ux{\underline{x}}
\def\uy{\underline{y}}
\def\uz{\underline{z}}
\def\Lab{{\mathbb L}}
\def\ed{\stackrel{{\mbox{\tiny \rm d}}}{=}}
\def\uX{\underline{X}}
\def\uY{\underline{Y}}
\def\uZ{\underline{Z}}
\def\de{{\rm d}}
\def\Fu{{\sf F}}
\def\eprob{\omega}
\def\lab{{\mathfrak h}}
\begin{document}

\title{Coding for Network Coding}
                                                                        
\author{Andrea Montanari\thanks{Departments of 
Electrical Engineering and Statistics,
Stanford University, Stanford CA-9305, USA}\;\;\;
 and \;\;
R\"udiger Urbanke\thanks{School of Computer and Communication Sciences, EPFL, 1015 Lausanne, CH
\newline
\newline
{\bf Keywords:} Sparse graph codes, probabilistic channel models, Shannon channel capacity, network coding}}

\date{November 19, 2007}
\maketitle
\abstract{
We consider communication over a noisy network under randomized
linear network coding. Possible error mechanism include node- or
link- failures, Byzantine behavior of nodes,  or an over-estimate
of the network min-cut.  Building on the work of K{\"o}tter and
Kschischang, we introduce a probabilistic model for errors.  We
compute the capacity of this channel and we define an error-correction
scheme based on random sparse graphs and a low-complexity decoding
algorithm. By optimizing over the code degree profile, we show that
this construction achieves the channel capacity in
complexity which is jointly quadratic in the number of coded information bits
and sublogarithmic in the error probability.}

\section{Introduction}
Consider a wire-line communication network modeled as a directed
acyclic (multi-)graph with edges of unit capacity. A sources wants
to communicate information to a set of receivers.  If we allow {\em
processing} of information at nodes in the network then the achievable
throughput is in general higher than what can be achieved by
schemes that only allow {\em routing} \cite{ACLY00,LYC03}.  Schemes
that employ processing are referred to as {\em network coding}
schemes.

The standard assumption in the network coding literature is that 
no errors are introduced within the network or, equivalently, that
sufficiently powerful error-correcting codes are employed on the links
at the physical layer. However a number of error sources (e.g.,
malicious or malfunctioning nodes) cannot be neglected. 
We consider a probabilistic model for transmission errors
that builds upon the work of Kschischang and K{\"o}tter \cite{KoK07,SKK07}.
We compute the information theoretic limit on point-to-point communication
for this model (the channel capacity) and
define a coding scheme based on a sparse-graph
construction and a low-complexity iterative decoding algorithm. 
We show that the parameters of the construction can be optimized analytically and, remarkably,
the optimized scheme achieves the channel capacity.
This is the second channel model 
for which iterative schemes can be shown to achieve capacity
(the first one being the binary erasure channel; this was shown in the 
seminal work of Luby, Mitzenmacher, Shokrollahi, Spielman, and
Steman \cite{LMSSS97}).

%
%
\section{Network Coding: Background and Related Work}

Assume that an information source generates 
$h$ symbols per unit time. The integer $h$ is referred to as the source
rate.
Information is encoded at the sender in packets of length $N$ with entries from a finite
field $\F_q$. The network is assumed to be synchronous 
and without delay. As a consequence, packets are aligned at the 
destination at regular time intervals.

The most common scenario studied in this context is a multi-cast one in
which the source aims at communicating the same information
to a set of receivers (distinct nodes in the same network.)
The fundamental theorem of network coding states that
this is possible using network coding (i.e., processing at the nodes) 
if the values of the min-cuts from the source to any of the receivers is at least $h$ \cite{ACLY00}. 
Moreover, {\em linear} network coding suffices \cite{LYC03}. This means that processing
at the nodes can be limited to forwarding packets which are linear (over $\F_q$) 
combinations of incoming packets.
Finally, it is not necessary to choose the local encoding functions at the nodes carefully.
{\em Random} linear combinations are sufficient with probability close to one, 
provided the cardinality of the field is large enough \cite{HKMKE03,KoM04}.
For a general introduction into network coding we refer the reader to \cite{YLCZ05a,YLCZ05b,FrS07}.

The preferred method to implement random linear network codes is
to include ``headers'' in the packets of length $N$ \cite{CWJ03}.
The role of the headers is to ``record'' the coefficients
used in the local encoding functions so that the receiver can 
be oblivious to the network topology and to the specific local encoding functions used.
In more detail, assume that we send $\ell$ packets. The header of each packet
is then an element of $(\F_q)^{\ell}$, where the header of the 
$i$-th source packet, $i \in [\ell]$, is the all-zero tuple, except for an 
identity element at position $i$. 
Recall that nodes forward packets which are linear combinations of the incoming packets.
Therefore, if the header of a packet somewhere in the network reads 
$(\beta_1, \dots, \beta_{\ell})$, $\beta_i \in \F_q$,
then we know that this packet is the linear combination of the $\ell$ original source packets,
where the $i$-th original source packet has ``weight'' $\beta_i$.
The significant advantage of such a scheme is that the receivers can be oblivious
to the topology and the local encoding functions.
Of course we pay some price; if we use headers then only $m=N-\ell$ of the $N$ symbols of
each packet are available for information transmission.
Our subsequent discussion assumes this ``oblivious'' model.

So far we assumed that errors neither occur during transmission nor during
processing. If the channel or the processing are noisy, one
can use coding to combat the noise.
Note that if we stack the $\ell$ source packets of length $N$   
on top of each other then we get an $\ell \times N$ matrix over $\F_q$
whose, lets say, left $\ell \times \ell$ submatrix 
(the collection of headers) is the identity matrix.

Formally, a code $\Code$ is a collection of $\ell\times N$ matrices
with elements in $\F_q$, such that each $M \in \Code$ takes the form 
$M = [\underline{1} \mid \ux]$. Here, $\underline{1}$ is the $\ell\times\ell$ 
identity matrix and $\ux$ is an $\ell\times m$ matrix ($m = N-\ell$). 
We say that $M$ is in {\em normal} form.
The code $\Code$ is thus equivalently described by  a collection of
$\ell\times m$ matrices $\{\ux\}$. The \emph{rate} of the code is
defined as the ratio of the number of information $q$-bits that can be conveyed by the choice of codeword
($\log_q|\Code|$) to the number of transmitted symbols ($N \ell$):
\begin{eqnarray} \label{equ:rate}
R(\Code) = \frac{\log_q|\Code|}{N\ell}\, .
\end{eqnarray}
Before the source packets are transmitted we multiply $M$ from the left by an 
$\ell \times \ell$ random invertible matrix with components in $\F_q$. 
This ``mixes'' the rows of $M$ and ensures that regardless of the network topology and the location
where the errors are introduced, the effect of the errors on the normalized form is uniform.
We then transmit each resulting row as one packet.

Upon transmission of $M$, a ``corrupted'' version $Q$ of the 
codeword is received. Without loss of generality, we assume that
$Q$ is brought back into normal form $Q = [\underline{1} \mid \uy]$
by Gaussian elimination.\footnote{
In principle it might be that the received matrix
cannot be brought in this form because its first $\ell$ columns have rank
smaller than $\ell$. However, within the probabilistic model which we
discuss in the following, the rank
deficiency is small with high probability and can be eliminated
by a small perturbation.}
Following K{\"o}tter and Kschischang \cite{KoK07},
we model the net effect of the transmission-  and the processing-``noise'' 
as a low-rank perturbation
of $\ux$. More precisely, we assume that
\begin{eqnarray}\label{equ:channelmodel}
\uy = \ux + \uz,
\end{eqnarray}
where $\uz$ is an $\ell\times m$ matrix over $\F_q$ of 
$\rank(\uz)=\ell \eprob$, $\eprob \in [0, 1]$. 
We call $\ell \eprob$ the {\em weight} of the error, and $\eprob$
the {\em normalized} weight.

Define the {\em distance}
of two codewords $\ux$ and $\ux'$ as $d(\ux, \ux')=\rank(\ux-\ux')$
and the {\em minimum} distance $d(\Code)$ of the code $\Code$ as the minimum of
the distances between all distinct pairs of codewords. 
The {\em normalized} minimum distance is $\delta(\Code) = d(\Code)/\ell$.
It is shown in \cite{KoK07} that
$d(\cdot, \cdot)$ is a true distance metric; in particular it fulfills
the triangle inequality. Therefore, given a code $\Code$
of minimum distance $d(\Code)$ a simple {\em bounded distance} 
decoder can correct all errors of weight
$s=(d(\Code)-1)/2$ or less. A bounded distance decoder is an algorithm that,
given a received word $\uy$, decodes $\uy$ to the unique word within distance
$s$ if such a word exists and declares an error otherwise. Bounded distance decoders
are popular since a suitable algebraic structure on the code often ensures that
bounded distance decoding can be accomplished with low complexity.

The {\em bounded-distance} error-correcting capability of a code is 
defined as $\eprob(\Code)=d(\Code)/(2\ell)=\delta(\Code)/2$.
K{\"o}tter and Kschischang
showed that the optimal trade-off between 
$R(\Code)$ and $\eprob(\Code)$ is given by an appropriate generalization of
the ``Singleton bound.'' In the limit $N \rightarrow \infty$,
with $\ell=\lambda N$, the maximal achievable rate
for the parameters $\eprob, \lambda \in [0, 1/2]$, call it 
$C_{\text{Singleton}}(\lambda,\eprob)$, is given by
\begin{eqnarray}\label{equ:singletonbound}
C_{\text{Singleton}}(\lambda,\eprob) & = (1-\lambda) (1-2 \eprob).
\end{eqnarray} 
Note that $C_{\text{Singleton}}(\lambda,\eprob)$ is the maximum achievable rate for a guaranteed error correction
in an adversarial channel model. It is also
the maximal achievable rate in a probabilistic setting if we are limited to 
bounded distance decoding.
Remarkably, K{\"o}tter and Kschischang found a  
generalization of Reed-Solomon codes that achieves this bound.

%
%
\section{Main Results}
We are interested in a probabilistic (as opposed to adversarial) channel model.
More precisely, we assume that in (\ref{equ:channelmodel}) the perturbation $\uz$ is 
chosen uniformly at random from all matrices in $(\F_q)^{\ell \times m}$ 
of rank $\ell \eprob$. We assume that the parameters $\lambda$ and $\eprob$ are fixed and
consider the behavior of the channel as we increase $N$. We refer to our channel model
as the {\em symmetric network coding channel} with parameters $\lambda$ and $\eprob$,
denoted by SNC$(\lambda,\eprob)$. 

\begin{propo}[Channel Capacity]
\label{pro:capacityforfixedparameters}
The capacity  of {\rm SNC}$(\lambda,\eprob)$ is
\begin{eqnarray}
C(\lambda,\eprob)  = 1-\lambda-\eprob+\lambda\eprob^2.
\end{eqnarray}
\end{propo}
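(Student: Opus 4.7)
The plan is to identify $C(\lambda,\eprob)$ with the (normalized) Shannon mutual information of the single-shot channel $\uy = \ux + \uz$ on $(\F_q)^{\ell\times m}$ and then evaluate the resulting combinatorial quantity asymptotically. The essential feature to exploit is that the channel is additive on a finite abelian group and that the noise $\uz$ is independent of $\ux$ and uniformly distributed on the set $\mathcal{S}_r \subseteq (\F_q)^{\ell\times m}$ of rank-$r = \ell\eprob$ matrices. Consequently $H(\uy\mid\ux) = H(\uz) = \log|\mathcal{S}_r|$ is independent of the input distribution, so maximizing the mutual information reduces to maximizing $H(\uy)$. By translation invariance the uniform input on $(\F_q)^{\ell\times m}$ is optimal and attains $H(\uy) = \ell m \log q$, giving
\begin{equation*}
\sup_{p_{\ux}} I(\ux;\uy) \;=\; \ell m\log q - \log|\mathcal{S}_r|.
\end{equation*}
A standard Fano's-inequality argument converts this into the converse bound that any code of vanishing error probability must satisfy $R \le (\ell m - \log_q|\mathcal{S}_r|)/(N\ell) + o(1)$.

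Next I would evaluate $|\mathcal{S}_r|$ asymptotically using the closed form $|\mathcal{S}_r| = \binom{\ell}{r}_q \prod_{j=0}^{r-1}(q^m - q^j)$. Taking $\log_q$ of each factor $q^m - q^j = q^m(1-q^{j-m})$ and using $\log_q\binom{\ell}{r}_q = r(\ell - r) + O(1)$ for the Gaussian binomial, one obtains
\begin{equation*}
\log_q|\mathcal{S}_r| \;=\; r(\ell + m - r) + O(1)
\end{equation*}
as $r \le \min(\ell,m)$ grow. Substituting $\ell=\lambda N$, $m=(1-\lambda)N$, $r=\lambda\eprob N$ yields $\ell m/(N\ell) \to 1-\lambda$ and $\log_q|\mathcal{S}_r|/(N\ell) \to \eprob(1-\lambda\eprob)$, so the capacity limit equals $1-\lambda-\eprob+\lambda\eprob^2$ as claimed.

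For achievability I would use the classical random-coding argument: draw $|\Code|$ codewords i.i.d.\ uniform in $(\F_q)^{\ell\times m}$ and decode $\uy$ to the unique $\ux'\in\Code$ with $\rank(\uy-\ux')=r$ (declaring an error otherwise). Since $\uy-\ux = \uz$ always has rank $r$, the transmitted codeword always satisfies the criterion; for any independently drawn codeword $\ux'\neq\ux$, the difference $\uy-\ux'$ is uniform on $(\F_q)^{\ell\times m}$ and satisfies the criterion with probability $|\mathcal{S}_r|/q^{\ell m}$. A union bound yields expected error probability $\le (|\Code|-1)|\mathcal{S}_r|/q^{\ell m}$, which vanishes for any rate strictly below $C(\lambda,\eprob)$. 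The main obstacle I anticipate is purely technical: controlling the $O(1)$ correction in the estimate of $\log_q|\mathcal{S}_r|$ carefully enough that achievability and converse match to leading order in $N\ell$; beyond that, everything reduces to a textbook Shannon capacity calculation once the channel is recognized as additive on a finite abelian group.
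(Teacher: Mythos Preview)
Your proposal is correct and follows essentially the same approach as the paper: both recognize the channel as additive with noise uniform on the set of rank-$r$ matrices, so $I(\ux;\uy)=H(\uy)-H(\uz)$ is maximized by uniform input and the problem reduces to the asymptotic count of $\ell\times m$ matrices of rank $r=\ell\eprob$. The only cosmetic difference is that you evaluate $\log_q|\mathcal{S}_r|$ via the exact Gaussian-binomial formula, whereas the paper sandwiches $|\mathcal{S}_r|$ between two elementary probabilistic bounds; both yield $\log_q|\mathcal{S}_r|=r(\ell+m-r)+O(N)$ and hence the stated capacity.
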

Discussion: In the definition of capacity we implicitly assume that 
the error probability $\eprob$ is not a function of $N$. 
Depending on the underlying physical error mechanism this may or may not be the case.
Note that for small $\omega$, $C(\lambda,\eprob)  \approx 1-\lambda-\eprob$, whereas
$C_{\text{Singleton}}(\lambda,\eprob)  \approx 1-\lambda-2(1-\lambda)\eprob$.
Fig.~\ref{fig:achievablerates} compares $C(\lambda,\eprob)$ with 
$C_{\text{Singleton}}(\lambda,\eprob)$ and shows the points that 
are achievable according to Theorem~\ref{thm:Main}.

\begin{thm}[Capacity-Achieving Iterative Code Construction]
\label{thm:Main}
For any $\lambda,\eprob\in (0,1)$ such that $(1-\lambda)/\lambda$
is an integer multiple of $\eprob$, any 
$R<C(\lambda,\eprob)$, and any $\pi>0$ there exists an
error correcting code and a decoding algorithm 
that achieves symbol error probability smaller than $\pi$,
with $O(N^4\,\log\log (1/\pi))$ decoding complexity. 
\end{thm}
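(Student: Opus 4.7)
The plan is to adapt the LMSSS proof \cite{LMSSS97} that sparse-graph codes achieve BEC capacity to this setting, exploiting the following structural observation. Writing a rank factorization $\uz = AB$ with $A \in \F_q^{\ell\times\ell\epsilon}$ and $B \in \F_q^{\ell\epsilon\times m}$, every column of the error $\uz_{*,j} = Ab_j$ lies in the common subspace $V = \mathrm{col}(A) \subset \F_q^\ell$. If $V$ were known in advance, the channel would reduce to an erasure channel on ``super-symbol'' columns $\ux_{*,j} \in \F_q^\ell$ in which the $\ell\epsilon$ coordinates parallel to $V$ are erased, with capacity $(1-\lambda)(1-\epsilon)$ per channel use. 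The gap between this and the true capacity $C = (1-\lambda)(1-\epsilon) - \lambda\epsilon(1-\epsilon)$ equals precisely the normalized dimension $\lambda\epsilon(1-\epsilon)$ of the Grassmannian of $\ell\epsilon$-subspaces of $\F_q^\ell$, so the decoder must simultaneously \emph{learn} $V$ from the syndromes and \emph{peel} the column erasures.

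Concretely, I would use an LDPC-like ensemble on super-symbols: $m$ variable nodes (one per column of $\ux$) and $k = (1-\lambda-R)N$ vector check nodes of the form $\sum_{j \in S} a_j \ux_{*,j} = 0 \in \F_q^\ell$, with prescribed edge-perspective degree distributions $\ledge$ and $\redge$. Applied to the received matrix each check yields a syndrome $\sum_{j\in S} a_j \uy_{*,j} = A\sum_{j\in S} a_j b_j \in V$, so syndromes accumulate \emph{directly} in $V$ from the very first iteration. The iterative decoder maintains at each variable node $j$ an affine coset $\uy_{*,j} + U_j$ with $U_j \supseteq V$ a running overestimate of the error subspace, and at each check node the span $V_c \subseteq V$ of the syndromes delivered along its edges. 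Check-to-variable messages shrink the $U_j$'s toward $V$ whenever the other neighbors of a check have resolved cosets; variable-to-check messages feed resolved $\ux_{*,j}$ values back into the checks to produce new syndrome components. The divisibility condition $(1-\lambda)/\lambda \in \epsilon\mathbb{Z}$ ensures that all the relevant dimensions stay integral so that the finite-length combinatorics cleanly match the density-evolution limit.

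The main obstacle is the density-evolution analysis, because the decoder's state is two-dimensional --- the codimension $\ell\epsilon - \dim V_c$ of the subspace yet to be learned and the fraction of super-symbols still bound to a non-trivial coset. In the appropriate asymptotic scaling I would expect the two phases to decouple: subspace learning behaves like a coupon-collector process on $V$ and completes in $O(1)$ iterations (only $\ell\epsilon$ independent syndromes are needed, and every check produces one), after which the erasure-peeling phase follows the standard BEC recursion $x_{t+1} = \epsilon\,\ledge\bigl(1-\redge(1-x_t)\bigr)$ on the residual erasure fraction. Verifying this decoupling and then importing the capacity-approaching $(\ledge,\redge)$ designs of \cite{LMSSS97} --- now at effective rate matching $C(\lambda,\epsilon)$ rather than $1-\epsilon$ --- is the heart of the argument. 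The remaining ingredients should be routine: an Azuma--Wormald edge-exposure martingale gives concentration around the DE prediction with deviations exponentially small in $N$; and each of the $O(\log\log(1/\pi))$ BEC-type iterations updates $O(N)$ variable nodes at $O(\ell^3) = O(N^3)$ per subspace linear-algebra step, yielding the claimed $O(N^4 \log\log(1/\pi))$ decoding complexity.
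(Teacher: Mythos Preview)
Your proposal contains a genuine structural gap. With \emph{scalar} check coefficients $a_j\in\F_q$, the message-passing decoder cannot shrink the subspace uncertainty at all. Initially every variable carries the coset $\uy_{*,j}+V$; since $a_jV=V$ for $a_j\neq 0$ and $V+V=V$, the check update produces $\sum_{j\in S\setminus\{j_0\}}a_j(\uy_{*,j}+V)=(\text{point})+V$, and intersecting two cosets of the \emph{same} subspace $V$ returns that coset unchanged. Equivalently, once $V$ is known, write $\ux_{*,j}=\uy_{*,j}-A\beta_j$ with unknown $\beta_j\in\F_q^{\ell\eprob}$; your $k$ checks yield exactly $k$ linear equations $\sum_j a_j\beta_j=\text{const}$ in the $m$ vector unknowns $\beta_j$, which is underdetermined whenever $k<m$, i.e.\ for any positive rate. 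There is no BEC here: in a BEC some symbols are fully known and peeling propagates that information outward, whereas in your setup \emph{every} column is equally ``partially erased'' along $V$ and scalar checks cannot break this symmetry.

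The ingredient you are missing is exactly what drives the paper's construction: each Tanner-graph edge carries a \emph{random invertible matrix} label $\lab_{i,a}$, so the check update forms $\sum_j W^{(t)}_{j\to a}\lab_{j,a}$, a sum of independently and uniformly rotated subspaces whose dimension is generically $\min\bigl(m,\sum_j\dim W^{(t)}_{j\to a}\bigr)$; intersecting this with the $(\ell\eprob)$-dimensional channel subspace then genuinely decreases the message dimension. Density evolution tracks the rescaled dimension $\xi^{(t)}=\dim W^{(t)}/(\ell\eprob)\in[0,1]$, and the integrality hypothesis $k\equiv(1-\lambda)/(\lambda\eprob)\in\naturals$ serves a different purpose than the one you state: it forces $\xi^{(t)}\in\{0,1\}$ for all $t$, collapsing the \emph{distributional} recursion to the scalar one
\[
\alpha_{t+1}=\sum_n\rho_n\,\prob\bigl\{\mathrm{Bin}(n-1,\alpha_t)\ge k\bigr\}\, .
\]
This is not the BEC recursion (which would be the special case $k=1$), and the capacity-achieving profile $\rho^*_{k,i}=(k-1)/\bigl((i-1)(i-2)\bigr)$ for $i\ge k+1$ is engineered to satisfy the identity $\sum_i\rho^*_{k,i}\,\prob\{\mathrm{Bin}(i-1,\alpha)\ge k\}=\alpha$ rather than being an LMSSS heavy-tail sequence. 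Finally, the paper sidesteps your syndrome-based subspace-learning phase altogether by reserving the last $\eprob\ell$ rows of $\ux$ as zero pilots, so the perturbation subspace is read off directly before iteration begins.
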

\begin{figure}[htp]
\begin{center}
\setlength{\unitlength}{1.0bp}%
\begin{picture}(136,120)(-16,-10)
{
\small
\put(0,0){
\put(0,0){\includegraphics[scale=1.0]{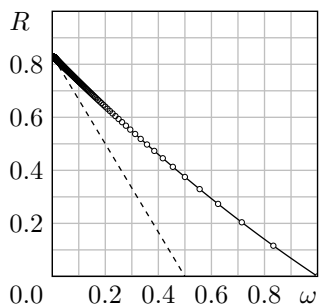}}
\put(100,-10){\makebox(0,0)[rb]{$\omega$}}
{
\multiputlist(0,-10)(20,0)[cb]{$~$,$0.2$,$0.4$,$0.6$,$0.8$}
\multiputlist(-16,0)(0,20)[l]{$~$,$0.2$,$0.4$,$0.6$,$0.8$}
\put(-16,-10){\makebox(0,0)[lb]{$0.0$}}
\put(-16,100){\makebox(0,0)[lt]{$R$}}
}
}
}
\end{picture}
\caption{
Comparison of $C(\lambda,\eprob)$ (solid line) with
$C_{\text{Singleton}}(\lambda,\eprob)$ (dotted line) for $\lambda=1/6$. The points
on the curve $C(\lambda,\eprob)$ that are achievable by the low-complexity iterative scheme
are shown as dots.  \label{fig:achievablerates}}
\end{center}
\vspace{-1cm}

\end{figure}
Discussion: The complexity of the scheme is given as $O(N^4)$. But note that
the number of transmitted information symbols is $N^2 \lambda R$. 
Therefore, if we measure the complexity per transmitted information 
symbol then it is only quadratic.

Note also that the complexity scales much better with 
the target error probability than for usual sparse graph codes 
(where it is at least linear in $\log(1/\pi)$).
%
%
\subsection{Code Construction}
Fig.~\ref{fig:codingscheme} shows our coding scheme.
\begin{figure}[htp]
\begin{center}
\setlength{\unitlength}{0.5bp}%
\begin{picture}(380,120)(0,0)
\put(0,0){\includegraphics[scale=0.5]{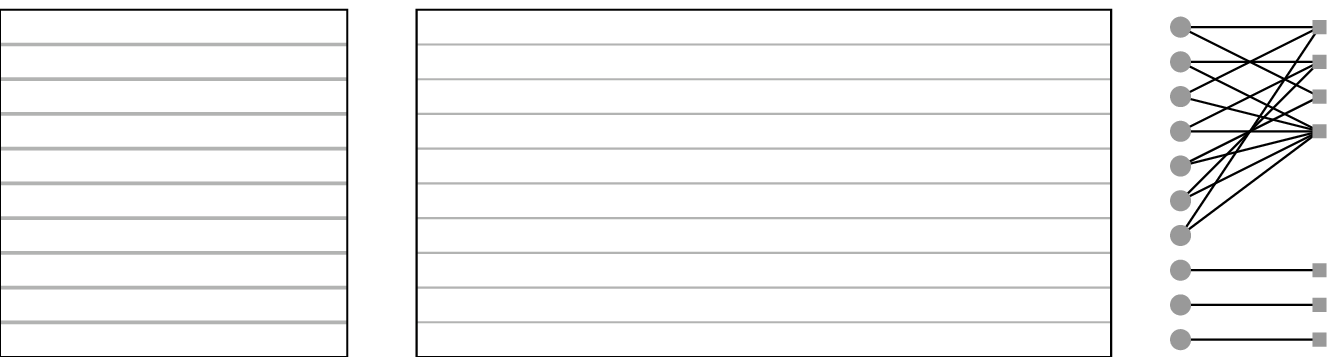}}
\put(5,95){\makebox(0,0){\tiny $1$}}
\put(15,85){\makebox(0,0){\tiny $1$}}
\put(25,75){\makebox(0,0){\tiny $1$}}
\put(35,65){\makebox(0,0){\tiny $1$}}
\put(45,55){\makebox(0,0){\tiny $1$}}
\put(55,45){\makebox(0,0){\tiny $1$}}
\put(65,35){\makebox(0,0){\tiny $1$}}
\put(75,25){\makebox(0,0){\tiny $1$}}
\put(85,15){\makebox(0,0){\tiny $1$}}
\put(95,5){\makebox(0,0){\tiny $1$}}
\put(25,25){\makebox(0,0){0}}
\put(75,75){\makebox(0,0){0}}
{\tiny \multiputlist(125,5)(0,10){$0$,$0$,$0$}}
{\tiny \multiputlist(135,5)(0,10){$0$,$0$,$0$}}
{\tiny \multiputlist(145,5)(0,10){$0$,$0$,$0$}}
{\tiny \multiputlist(155,5)(0,10){$0$,$0$,$0$}}
{\tiny \multiputlist(165,5)(0,10){$0$,$0$,$0$}}
{\tiny \multiputlist(165,5)(0,10){$0$,$0$,$0$}}
{\tiny \multiputlist(175,5)(0,10){$0$,$0$,$0$}}
{\tiny \multiputlist(185,5)(0,10){$0$,$0$,$0$}}
{\tiny \multiputlist(195,5)(0,10){$0$,$0$,$0$}}
{\tiny \multiputlist(195,5)(0,10){$0$,$0$,$0$}}
{\tiny \multiputlist(285,5)(0,10){$0$,$0$,$0$}}
{\tiny \multiputlist(295,5)(0,10){$0$,$0$,$0$}}
{\tiny \multiputlist(305,5)(0,10){$0$,$0$,$0$}}
{\tiny \multiputlist(315,5)(0,10){$0$,$0$,$0$}}
\put(-5,50){\makebox(0,0)[r]{$\ell=\lambda N \left\{\vphantom{\begin{array}{c} a \\ a \\ a \\ a \end{array}}\right.$}}
\put(385,15){\makebox(0,0)[l]{$\left.\vphantom{\begin{array}{c} a \end{array}}\right\} \eprob \ell $}}
\put(385,80){\makebox(0,0)[l]{$\left.\vphantom{\begin{array}{c} \hat{A} \end{array}}\right\} (1-\eprob) \ell \frac{2}{\Rho'(1)}$}}
\put(50,105){\makebox(0,0)[b]{$\overbrace{\hphantom{\hspace{45pts}}}^{\ell=\lambda N}$}}
\put(220,105){\makebox(0,0)[b]{$\overbrace{\hphantom{\hspace{95pts}}}^{m=(1-\lambda) N}$}}
\put(220,50){\makebox(0,0)[c]{$\ux=\left(\begin{array}{c} \!\!x_1\!\! \\ \vdots \\ \!\!x_{\ell}\!\! \end{array}\right)$}}

\end{picture}
\caption{Coding scheme. The last $\eprob \ell$ rows of $\ux$ are zero.
The first $(1-\eprob) \ell$ obey a set of linear constraints represented by
the bipartite graph shown on the right-hand side. \label{fig:codingscheme}}
\end{center}
\vspace{-0.75cm}

\end{figure}
Each row corresponds to a packet of length $N$. 
The $\ell \times \ell$ identity matrix $\underline{1}$ is shown on the left-hand side,
whereas the $\ell \times m$ matrix on the right-hand side represents $\ux$.
Each $\ux$ corresponds to a codeword of $\Code$. 
Not all $\ux$ are allowed.
Here are the constraints that $\ux$ must fulfill to be a codeword.
The bottom $\eprob \ell$ rows are identical to zero.
The top $(1-\eprob) \ell$ rows are constrained by a linear system of equations.
These are indicated by the bipartite graph on the right-hand side,
according to the standard graphical representation used for
low-density parity-check codes \cite{Gal63,RiU07}.
More precisely, we have 
\begin{align}\label{equ:codeconstraints}
\hat{\H}  
\left(
\begin{array}{c}
x_1^T \\ 
\vdots
\\ 
x^T_{(1-\eprob) \ell} \\ 
\end{array}
\right) 
= 0.
\end{align} 
The matrix $\hat{\H}$ has the following structure. Start with a 
``sparse'' $((1-\eprob) \ell r) \times ((1-\eprob) \ell)$  $\{0, 1\}$-valued
matrix $\H$. The matrix $\H$ has exactly $2$ non-zero entries along each 
column.
Further, the fraction of rows that contain exactly $i$ non-zero entries
is equal to $\Rho_i$, where  $\Rho(x)=\sum_{i} \Rho_i x^i$ is
a given {\em degree distribution} (in particular,
it fulfills $\Rho_i \geq 0$ and $\Rho(1)=1$.) In the following, we 
shall say that $\Rho$ has {\em bounded support} if $\Rho_i=0$ for 
$i$ larger than some $n_{\rm max}<\infty$ or, equivalently, if $\Rho(x)$
is a polynomial. 

The matrix $\H$ is represented by the graph. {\em Circles} 
(on the right-hand side in Fig.~\ref{fig:codingscheme}) correspond to
the columns of $\H$ and {\em squares} (on the left-hand side) correspond to the rows of $\H$.
There is an edge between a circle and an edge iff there is a non-zero 
entry at the corresponding row and column of $\H$. 
Following the iterative coding literature, we refer
to the circles as the {\em variable} nodes, to the squares as
the {\em check} nodes,
and we call this graph a {\em Tanner} graph. To get the matrix 
$\hat{\H}$ we ``lift''
$\H$ by replacing each of its non-zero elements by an $m \times m$ invertible matrix with
elements in $\F_q$. We can visualize this by attaching these invertible matrices 
as labels to the corresponding edges.

We claim that for any choice of the matrix $\hat{\H}$ compatible
with the degree distribution $\Rho(x)$ the rate of the code is {\em at least}
\begin{align}\label{equ:rateofourcode}
R(\eprob, \lambda, \Rho) =  (1-\lambda)(1-\eprob) \Bigl(1-\frac{2}{\Rho'(1)} \Bigr).
\end{align}
To see this, note that the matrix $\ux$ is of dimension $\ell \times m$ and has
entries in $\F_q$.
Since the last $\eprob \ell$ rows have to be zero this reduces the degrees of freedom by
$m \eprob \ell$.
Further, there are $m (1-\eprob) \ell \frac{2}{\Rho'(1)}$ linear
constraints, taking away {\em at most} that many further degrees of freedom
(and possibly less because of linear dependencies). We get the claim by
dividing the remaining degrees of freedom by $N \ell$, in accordance with (\ref{equ:rate}).

So far we have explained how to construct a code. We define an {\em ensemble} of codes by
$(i)$ picking a matrix $\H$ uniformly from all matrices that have degree
profile $\Rho(x)$ according to the configuration model and, $(ii)$
picking the labels (the $m \times m$ invertible matrices) for all edges 
uniformly and independently for each edge. 
We denote the resulting ensemble by ${\cal C}(N, \lambda, \eprob, \Rho(x))$.
\vspace{0.1cm}

Discussion: In Fig.~\ref{fig:codingscheme} all linear constraints are on the rows of
$\ux$. An entirely equivalent formulation is to apply the linear constraints to the columns of
$\ux$ instead; i.e., set the last $\eprob m$ columns of $\ux$ to zero and apply a set of linear
constraints on the first $(1-\eprob)m$ columns of $\ux$. All subsequent statements apply 
also to this case and yield identical results if we let $N$ tend
to infinity. For the sake of simplicity, we limit our
discussion to the scheme of Fig.~\ref{fig:codingscheme}. 
In a practical implementation,
however, there can be reasons to prefer one scheme over the other. 
For instance, the iterative decoder discussed in the next section
might be more effective on a larger Tanner graphs. This suggests to use
the construction in Fig.~\ref{fig:codingscheme} if $\ell>m$ and the 
`transposed' one otherwise.
%
%
\subsection{Encoding and Decoding Algorithm}\label{sec:encodinganddecoding}
Assume that the parameters of the model ($N$, $\lambda$, $\eprob$, and $\Rho(x)$) are fixed
and that we have chosen one particular code from the ensemble 
${\cal C}(N, \lambda, \eprob, \Rho(x))$. 
At the source we are given $R N \ell$ symbols over $\F_q$ 
(the information we want to transmit).
We need to map each of these $q^{R N \ell}$ possible information vectors to a distinct codeword $\ux$.
This is the {\em encoding} task.
In principle this can be done by solving a linear system of equations, starting
with (\ref{equ:codeconstraints}).
A brute force approach, however, has complexity $O(N^6)$. Fortunately, one can exploit the sparseness of the matrix $\H$ to reduce the encoding complexity to $O(N^3)$.
The basic idea is to bring $\H$ into upper-triangular form by using only row
and column permutations but no algebraic operations.
As proved in \cite{RiU00encoding}, this can be done with high probability if 
$\Rho''(1)/\Rho'(1)>1$. We will see in 
Section \ref{sec:Proofs}, cf. Lemma \ref{lem:capacityachievingrho},
that this condition is always fulfilled. 
Further details on the efficient implementation of the encoder
will be discussed in a forthcoming publication. We are currently mainly
concerned with the decoding problem.

The receiver sees the perturbed matrix $\uy$. An equivalent description
of our channel model is the following.
Each row of $\uy$ is the result of adding to the corresponding row of 
$\ux$ a uniformly random element of a subspace $W$ of $(\F_q)^m$.
The subspace $W$ is itself uniformly random under the condition
$\dim(W) = \eprob \ell$.\footnote{As discussed in the introduction, 
the underlying physical process is the following: 
we add the headers to the rows of $\ux$; we scramble the rows of $M$
multiplying it by a random invertible matrix in $(\F_q)^{m\times m}$; 
we send the resulting
packets; the channel perturbs these packets; the receiver collects
the perturbed packets, stacks them up to a matrix $Q$, brings the matrix
back into normal form, and ``strips off'' the headers.}

Recall that by assumption the last $\ell \eprob$ rows of $\ux$ are zero.
In fact, in order to achieve reliable transmission we need to 
modify the scheme described so far and set the last $\ell\eprob'$ rows of $\ux$
to $0$, where $\eprob'>\eprob$ is arbitrarily close to $\eprob$.
This modification reduces the rate by a quantity that can be made arbitrarily 
small.
Since the perturbation has dimension $\ell\eprob$,
the last $\ell\eprob'$ rows of $\uy$ will span $W$ with high probability
as $N\to\infty$. A basis of $W$ is then obtained by reducing these rows
via gaussian elimination.

We therefore assume hereafter that $W$ is known and, to avoid
cumbersome notation, we set $\omega'=\omega$.  The decoding task
consists in finding the perturbations for the first $(1-\eprob)
\ell$ rows of $\uy$. If we subtract these perturbations form $\uy$,
we have found $\ux$.  Throughout the description, given two sets of
vectors $U_1$ and $U_2$, we let $U_1+U_2\equiv\{u_1+u_2: \, u_1\in
U_1,\, u_2\in U_2\}$ and, for a given vector $x$, $x+U\equiv
\{x\}+U$.  Finally, given a matrix $\lab\in (\F_q)^{m\times m}$,
$U\lab\equiv\{u\lab:\, u\in U\}$ (vectors are always thought as row
vectors).

We proceed in an iterative fashion. The basic principle is easily understood.
We know that $x_i\in y_i
+ W$. In words, we know that $x_i$ lies in a given affine subspace.
Consider a check node $a$ and, without loss of generality,
let its neighbors be $1, \dots, d$. Let $\lab_{i a}$, $i=1, \dots, d$,
denote the corresponding edge labels. As we discussed earlier, each
such edge label is an $m \times m$ invertible matrix with entries in $\F_q$.
By the definition of the code, $\sum_{i=1}^{d}  x_i\lab_{i,a} = 0$.
In particular, this means that $x_1 \in (\sum_{i=2}^{d} x_i
\lab_{i,a} )\lab_{1,a}^{-1}$.  Since we know that $x_i\in y_i + W$,
this implies that
\begin{align*}
x_1 \in [ (y_2+W)\lab_{2,a} + \cdots +  (y_d+W)\lab_{d,a}]\lab_{1,a}^{-1}\,
.
\end{align*} 
Since we also know that $x_1 \in y_1 + W$, this
implies that 
\begin{align}\label{equ:messagepassin0}
x_1 \in (y_1 + W) \cap \{[(y_2+W)
\lab_{2,a} + \cdots +  (y_d+W)\lab_{d,a}]\lab_{1,a}^{-1}\}\, . 
\end{align}

The actual decoder is most conveniently described (and analyzed)
as a `message passing' algorithm, with messages being sent along the edges
of the Tanner graph.
Messages are affine subspace of $(\F_q)^m$.  They are sent in rounds.
First we send messages from the variable nodes to the check nodes.
We process the incoming messages at the check nodes and then send
messages on all edges from the check nodes to the variable nodes.
This concludes one {\em iteration of message passing}.  

In more
detail, the message sent from variable node $i$ to check node $a$
in the $t$-th iteration is an affine subspace $W_{i\to a}^{(t)}$
of $(\F_q)^m$. If variable node $i$
is connected to check node $a$, let $\bar{a}$ denote the
second check node that is connected to $i$ (recall that each variable
node has exactly two neighbors).
Variable nodes do not perform any non-trivial processing
of the messages, and check-to-variable
node messages coincide with variable-to-check ones
$W_{i\to a}^{(t)}=W_{\bar{a} \to i}^{(t)}$.

For $t=0$ we have $W^{(0)}_{i\to a}=y_i+W$
for all variable nodes $i$ and all check nodes $a$.
Further, let $\partial a$ denote
all neighbors of a check node $a$.  According to the above discussion,
we apply for $t \geq 0$ the recursion
\begin{align}\label{equ:messagepassin} 
W_{i\to a}^{(t+1)} = (y_i + W) 
\cap \Bigl\{\Bigl[\sum_{j \in \partial \bar{a} \setminus i} 
W_{j\to \bar{a}}^{(t)} \lab_{j, \bar{a}} \Bigr]
\lab_{i, \bar{a}}^{-1}\Bigr\}  \, .  
\end{align}
If, after some iterations, 
$\dim(W^{(t)}_{i\to a} \cap W^{(t)}_{i\to \bar{a}})=0$,
then have determined the $i$-th row of $\ux$,
namely $W^{(t)}_{i\to a} \cap W^{(t)}_{i\to \bar{a}} = \{x_i \}$.

Our (main) Theorem~\ref{lem:capacityachievingrho} affirms that.
for given parameters $\lambda$ and $\eprob$, the degree
distribution $\Rho(x)$ can be chosen in such a way that 
the rate of the overall code
approaches the capacity arbitrarily closely and that 
the decoder succeeds with high probability when the packet size 
tends to infinity.

%
\section{Proofs}\label{sec:Proofs}

In the next section we state a few auxiliary lemmas on
the behavior of the message-passing decoder and prove Theorem~\ref{thm:Main}. 
The lemmas are then proved in Section~\ref{sec:ProofLemmas}.
Finally, the capacity of the network coding channel is computed in 
Section \ref{sec:Capacity}.
%
%
\subsection{Auxiliary Results and Proof of the Main Theorem}

To start we can simplify our proof in two manners. First, 
by symmetry of the channel and the message-passing rules, we can
assume that the all-zero matrix $\ux$ was transmitted and we need
only analyze the behavior of the decoder for this case.
Notice that, under this assumption, the messages $W^{(t)}_{i\to a}$
are {\em linear} subspaces (as they must contain the transmitted 
vectors $x_i=0$.)
Second, as we discussed in Section~\ref{sec:encodinganddecoding}, the first
step of the decoding procedure consists of learning the perturbing subspace $W$.
Because of the special structure of the matrix $\ux$ (the last $\omega 
\ell$ rows
are zero) this is accomplished by a simple inspection. We therefore assume in
all that follows that $W$ is known and that the all-zero matrix was transmitted.

Throughout this section we let $\Rho$ be a distribution over the 
integers and let $G$ be a random multi-graph over $\ell(1-r)$ nodes
with degree distribution $\Rho$. The graph $G$ is drawn
according to the configuration model and the code is constructed
from $G$ as described in the previous section.
Since variable nodes have degree $2$, we can think of $G$ either as a multi-graph 
over the check nodes, or as a bipartite graph over check {\em and} variable nodes.

It is also useful to define the `edge perspective' degree distribution
\begin{eqnarray}
\rho_n = \frac{n\Rho_n}{\sum_{n'\ge 0}n'\Rho_{n'}}\, .
\end{eqnarray}

For a uniformly random edge in $G$,  let $W^{(t)}$
be the associated message (that, we recall, is an affine subspace in $(\F_q)^m$).
The key step in the analysis is to notice that the dimension of
$W^{(t)}$ satisfies a simple recursion.

First consider $n-1$ independent and uniformly random linear subspaces
$V_1,\dots,V_{n-1}\subseteq (\F_q)^m$ of dimensions $d_1,\dots,
d_{n-1}$, respectively. Let $V$ be a fixed subspace of $\dim(V)=D$, and define
\begin{eqnarray} 
K_{m,D}^{(n)}(d \mid d_1,\dots,d_{n-1}) \equiv
\prob\{\dim(V\cap(V_1+\dots +V_{n-1})) = d\}, .\label{eq:Kernel}
\end{eqnarray}
The probability kernel $K_{m,D}^{(n)}$ admits an explicit albeit cumbersome expression in terms
of Gauss polynomials. Fortunately,  we do not need its exact description in the following.

We define a sequence of integer-valued random variables
$\{D^{(t)}\}_{t\ge 0}$ recursively as follows. 
For $t=0$ we let $D^{(0)}=\ell\eprob $ identically. For
$t\ge 0$, choose $n$ with distribution $\rho_n$, and draw
$D^{(t)}_1, \dots, D^{(t)}_{n-1}$ iid copies of $D^{(t)}$.
Then, the probability of $D^{(t+1)}=d$, conditioned on the values
$D^{(t)}_1=d_1,\dots,D^{(t)}_{n-1}=d_{n-1}$ coincides with 
Eq.~(\ref{eq:Kernel}) where $D=\ell\eprob$. In formulae,
\begin{eqnarray}
\prob\{D^{(t+1)}=d\}= \sum_{n\ge 1}\rho_n\sum_{d_1\dots d_{n-1}}\!
K_{m,D}^{(n)}(d \mid d_1,\dots,d_{n-1})\,
\prob\{D^{(t)}=d_1\}\cdots \prob\{D^{(t)}=d_{n-1}\}\, .\label{eq:DiscreteDE}
\end{eqnarray}
The sequence $\{D^{(t)}\}$ accurately tracks the dimension of $W^{(t)}$
as stated below.
\begin{lemma}[Density Evolution on a Graph versus Density Evolution on a Tree]\label{lemma:DiscreteDE}
For any degree distribution $\Rho$ with bounded support 
(i.e. such that $\Rho_n=0$ for all $n$ large enough) and any 
$t \in \naturals$
there exists a sequence $\epsilon(\ell,t)$ with $\epsilon(\ell,t)\downarrow 0$
as $\ell\to\infty$, such that, for any $m$, and $\ell$,
\begin{eqnarray}
||\prob\{\dim(W^{(t)})\in\,\cdot\,\}
-\prob\{D^{(t)}\in\,\cdot\,\}||_{\sTV}\le \epsilon(\ell,t)\, ,
\end{eqnarray}
where we recall that $||\prob_X-\prob_{Y}||_{\sTV}=
\sup_{A}|\prob(X\in A)-\prob(Y\in A)|$.
\end{lemma}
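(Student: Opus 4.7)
The plan is to combine two ingredients that are standard in the analysis of iterative decoders on sparse graphs. First I would show that the depth-$2t$ neighborhood in $G$ of a uniformly chosen edge is a tree with probability tending to one as $\ell \to \infty$. Second, conditional on this tree event, I would argue that $\dim(W^{(t)})$ has exactly the law of $D^{(t)}$, so that a trivial coupling on the tree event gives the total-variation bound.

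Because $G$ is built from the configuration model and $\Rho$ has bounded support, the local neighborhood analysis is routine. The expected number of half-edges discovered in a breadth-first exploration to depth $2t$ from a uniformly random edge is bounded by a constant $C_t$ independent of $\ell$, and the probability that any given pair among these half-edges is matched to one another (thereby creating a short cycle) is $O(1/\ell)$. A union bound therefore yields
\[
\prob\{\text{depth-}2t \text{ neighborhood is not a tree}\} \le C'_t/\ell,
\]
giving the first contribution to $\epsilon(\ell,t)$.

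For the tree computation I would proceed by induction on $t$. Under the all-zero codeword assumption, $W^{(0)}_{i\to a} = W$ deterministically, matching $D^{(0)} = \ell\eprob$. For the inductive step, the subtrees hanging from the children $j\in\partial\bar a\setminus i$ of $\bar a$ are edge-disjoint, so the messages $\{W^{(t)}_{j\to\bar a}\}_j$ depend on disjoint sets of edge labels and initial data, and are therefore mutually independent; furthermore each $W^{(t)}_{j\to\bar a}$ is independent of the label $\lab_{j,\bar a}$ that bounds its subtree from above. Since multiplying a subspace by a uniformly random invertible matrix produces a uniformly random subspace of the same dimension, and this $GL_m$-invariance is preserved under summation, the subspace $S \equiv \sum_j W^{(t)}_{j\to\bar a}\lab_{j,\bar a}$ is uniform among subspaces of its (random) dimension. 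Hence $W\cap S\lab_{i,\bar a}^{-1}$ has dimension distributed by the kernel $K_{m,D}^{(n)}$ of (\ref{eq:Kernel}) with $D=\ell\eprob$. Combined with the fact that the degree $n$ of the check node $\bar a$ reached along a uniformly random edge is $\rho_n$-distributed (the standard edge-perspective identity in the configuration model), and with the inductive hypothesis applied to each branch, this reproduces the recursion (\ref{eq:DiscreteDE}) exactly.

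The main technical hurdle is the bookkeeping of independence and $GL_m$-invariance of the subspace messages as the recursion unfolds: one must verify that the inductive hypothesis on the joint distribution of sibling messages is preserved both by the sum and by the intersection with $W$ translated by $\lab_{i,\bar a}^{-1}$, and that the conditional independence between different branches is not spoiled by the common label $\lab_{i,\bar a}$ appearing on the root edge (this last point is where one uses that $\lab_{i,\bar a}$ is independent of everything in the subtree below $\bar a$). Once this invariance is carefully justified, the two contributions combine to give $\epsilon(\ell,t)=O(C'_t/\ell)\downarrow 0$ for each fixed $t$, which is the assertion of the lemma.
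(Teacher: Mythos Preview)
Your proposal is correct and follows essentially the same route as the paper: local weak convergence of the configuration-model neighborhood to a Galton--Watson tree with offspring law $\rho$, combined with an inductive argument that on the tree the message dimensions obey the recursion (\ref{eq:DiscreteDE}) thanks to the $GL_m$-invariance supplied by the random edge labels. One small imprecision worth fixing: conditional merely on the depth-$2t$ neighborhood being a tree, the degrees encountered in a finite-$\ell$ exploration are not \emph{exactly} $\rho$-distributed (only up to $O(1/\ell)$ corrections), so the cleaner bookkeeping---which the paper uses---is to bound directly the total-variation distance between the law of the labeled neighborhood and that of the labeled Galton--Watson tree, rather than to condition on a tree event and claim an exact distributional match.
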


Controlling the sequence of random variables $\{D^{(t)}\}_{t\ge 0}$
is quite difficult. Luckily, its behavior simplifies considerably
if we let $m\to\infty$ and consider the scaled
dimensions $D^{(t)}/(\ell\eprob)$. 

More precisely, we define the sequence of 
random variables $\{\xi^{(t)}\}_{t\ge 0}$ with values in $[0,1]$
recursively as follows. We let $\xi^{(0)}=1$ identically. For any 
$t\ge 0$, let $n$ be drawn with distribution $\rho_n$, and 
$\xi^{(t)}_1, \dots, \xi^{(t)}_{n-1}$ be iid copies of $\xi^{(t)}$.
Further, for $a,b,x\in\reals$ with $a\le b$, 
define  $[x]_a^b = \min(\max(x,a),b)$.
Then, the distribution of $\xi^{(t+1)}$ is given by
\begin{eqnarray}
\xi^{(t+1)}\ed \left[\sum_{i=1}^{n-1}\xi^{(t)}_i+1-
\left(\frac{1-\lambda}{\lambda\eprob}\right)\right]_0^{1}\, .\label{eq:XiRec}
\end{eqnarray}
We will prove that the rescaled dimensions $D^{(t)}/(\ell\eprob)$
are accurately tracked by $\xi^{(t)}$.
\begin{lemma}[Density Evolution versus Rescaled Density Evolution]\label{lemma:ContinuousDE}
For any $n_{\rm max}$, $\eprob$, and $\lambda$ there exists 
$\ve>0$ such that, for any degree distribution $\Rho$ with support in 
$[0,n_{\rm max}]$:
\begin{eqnarray}
\lim_{m\to\infty}\prob\{D^{(t+1)}>0\}\le n_{\rm max}\,
\prob\{\xi^{(t)}\ge \ve\}\, .
\label{eq:ContinuousDE}
\end{eqnarray}
\end{lemma}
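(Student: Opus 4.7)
The plan is to reduce $\prob\{D^{(t+1)}>0\}$ to a tail bound on the ``parent'' dimensions $D^{(t)}$, via a union bound combined with a first--moment (``generic position'') estimate, and then to invoke an inductive distributional convergence of $D^{(t)}/(\ell\eprob)$ to $\xi^{(t)}$ as $m\to\infty$ in order to rewrite the tail in terms of $\xi^{(t)}$.

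The core probabilistic input is the following. Let $V$ be a fixed subspace of $(\F_q)^m$ of dimension $D$, and let $V_1,\dots,V_{n-1}$ be independent uniform random subspaces of dimensions $d_1,\dots,d_{n-1}$. Then
\begin{equation*}
\prob\bigl\{\dim\bigl(V\cap(V_1+\cdots+V_{n-1})\bigr)>0 \bigm| d_1,\dots,d_{n-1}\bigr\}\;\le\;\frac{q^{D+d_1+\cdots+d_{n-1}-m}}{q-1}.
\end{equation*}
I would prove this in two steps. First, by $GL_m(\F_q)$--invariance of the sampling, the conditional law of $U=V_1+\cdots+V_{n-1}$ given $\dim U=D_U$ is uniform on $D_U$--dimensional subspaces. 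Second, for a uniformly random $U'$ of dimension $D_U$ and fixed $V$, a first--moment count of the non--zero vectors of $V\cap U'$ gives a probability at most $(q^D-1)(q^{D_U}-1)/((q^m-1)(q-1))$. Since $\dim U\le d_1+\cdots+d_{n-1}$ almost surely, the stated bound follows.

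Now choose $\epsilon\in(0,1)$ small enough that $(n_{\rm max}-1)\epsilon<(1-\lambda)/(\lambda\eprob)-1$; such an $\epsilon$ exists whenever the right--hand side is positive, which is the only regime in which the lemma has content (otherwise $\prob\{\xi^{(t)}\ge\epsilon\}=1$ for small $\epsilon$ and the conclusion is trivial). Under this choice, if $d_i\le\epsilon\ell\eprob$ for every $i$ then $D+\sum_i d_i\le\ell\eprob(1+(n_{\rm max}-1)\epsilon)<m-c\ell$ for some constant $c>0$, using $D=\ell\eprob$ and $m/\ell=(1-\lambda)/\lambda$. A union--bound decomposition now yields
\begin{equation*}
\prob\{D^{(t+1)}>0\}\;\le\;(n_{\rm max}-1)\,\prob\{D^{(t)}\ge\epsilon\ell\eprob\}\;+\;\frac{q^{-c\ell}}{q-1},
\end{equation*}
whose second term vanishes as $m\to\infty$.

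To close the loop, I would prove by induction on $t$ the weak convergence of $D^{(t)}/(\ell\eprob)$ to $\xi^{(t)}$ as $m\to\infty$. The base case $t=0$ is immediate (both sides equal $1$ deterministically). For the inductive step, conditional on $d_i/(\ell\eprob)$ close to $\xi^{(t)}_i$, the kernel $K_{m,D}^{(n)}(\,\cdot\,\mid d_1,\dots,d_{n-1})$ concentrates on the generic value $[D+\sum_i d_i-m]_0^D$: the upper tail is the first--moment bound above, and a matching lower tail comes either from a companion second--moment computation or from the explicit Gauss--polynomial expression for $K_{m,D}^{(n)}$. After dividing by $\ell\eprob$ this recovers precisely the recursion (\ref{eq:XiRec}) for $\xi^{(t+1)}$; choosing $\epsilon$ so as to avoid any atom of the limit law at that point, one obtains $\lim_{m\to\infty}\prob\{D^{(t)}\ge\epsilon\ell\eprob\}=\prob\{\xi^{(t)}\ge\epsilon\}$, and combining with the previous display completes the proof. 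The main obstacle will be exactly this inductive distributional step: while the first--moment upper bound is routine, the matching lower tail requires a genuine second--moment argument or explicit Gauss--polynomial asymptotics, and the transition regime $D+\sum_i d_i\approx m$ demands particular care because the generic formula switches between two regimes there.
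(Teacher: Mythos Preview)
Your plan is correct and follows essentially the same route as the paper: bound $\prob\{D^{(t+1)}>0\}$ by a union bound on the parent dimensions together with a Markov/first--moment estimate on the intersection, and separately show that $D^{(t)}/(\ell\eprob)$ tracks $\xi^{(t)}$ in the limit. The paper organizes the second part as an explicit recursive \emph{coupling} (so that $|D^{(t)}-(\ell\eprob)\xi^{(t)}|\le \ell\ve$ with high probability) rather than as weak convergence, but the content is the same.

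Two remarks are worth making. First, the obstacle you flag --- the ``matching lower tail'' requiring a second--moment argument or Gauss--polynomial asymptotics --- is not actually an obstacle. The identity $\dim(V_1+V_2)=d_1+d_2-\dim(V_1\cap V_2)$ converts the lower tail for $\dim(V_1+V_2)$ into the upper tail for $\dim(V_1\cap V_2)$, and the latter is exactly your first--moment bound. Iterating this over $V_1,\dots,V_{n-1}$ and then intersecting with $V$ gives a two--sided concentration
\[
\prob\bigl\{\,\bigl|\dim((V_1+\cdots+V_{n-1})\cap V)-[d_1+\cdots+d_{n-1}+D-m]_0^D\bigr|\ge k\,\bigr\}\le n\,q^{-k/n},
\]
from first moment alone; this is precisely what the paper records as its auxiliary Proposition and what drives the coupling. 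No second moment and no explicit $q$--analog computation is needed, and the ``transition regime'' $D+\sum d_i\approx m$ is handled automatically by the clipping $[\,\cdot\,]_0^D$.

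Second, your final step ``choose $\epsilon$ to avoid any atom of the limit law'' is slightly dangerous as written, because the lemma fixes $\ve$ depending only on $n_{\rm max},\eprob,\lambda$, \emph{before} the degree distribution (and hence the law of $\xi^{(t)}$) is chosen. You can either drop this remark --- Portmanteau for the closed set $[\epsilon,\infty)$ already gives $\limsup\prob\{D^{(t)}\ge \epsilon\ell\eprob\}\le \prob\{\xi^{(t)}\ge\epsilon\}$ without any continuity point assumption --- or, as the paper does, use the coupling directly: from $|D^{(t)}-(\ell\eprob)\xi^{(t)}|\le \ell\ve'$ with $\ve'<\ve\eprob$ one gets $\prob\{D^{(t)}\ge 2\ve\ell\eprob\}\le \prob\{\xi^{(t)}\ge\ve\}+o(1)$ with no atom issue.
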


The previous lemma shows that 
it suffices to consider the behavior of $\xi^{(t)}$ for which we have the
explicit simple recursion (\ref{eq:XiRec}). Even so, finding a degree distribution
$\rho$ which results in codes of large rates and so that  $\xi^{(t)}$ converges to
$0$ for large values of $\delta$, seems challenging. The key to our analysis
is the observation that the recursion (\ref{eq:XiRec}) simplifies significantly
if $(1-\lambda)/\lambda$ is an integer multiple of $\eprob$. In this case
the distribution of $\xi^{(t)}$ trivializes:  $\xi^{(t)}$ only takes on the
values $0$ or $1$ regardless of the degree distribution $\rho$. Density evolution
therefore collapses to a scalar recursion, making it possible to find the optimum degree
distribution $\rho$.

\begin{lemma}[Capacity Achieving Degree Distributions for Rescaled Density Evolution]\label{lemma:KeyLemma}
Let $\lambda,\eprob\in (0,1)$ be such that $(1-\lambda)/\lambda$
is an integer multiple of $\eprob$ and let
$r<C(\lambda,\eprob)/((1-\lambda)(1-\eprob))$.
Then there exists 
$\rho$ with bounded support and $1-2\int_0^1\rho(x)\de x\ge r$,
 and two constants $A>0$,  $\gamma>1$ 
such that, for any $t$, $\ve>0$,
\begin{eqnarray}\label{equ:xitozero} 
\prob\{\xi^{(t)}\ge\ve\}\le \exp\{-A \gamma^t\}\, .
\end{eqnarray} 
\end{lemma}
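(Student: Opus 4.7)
The plan is to exploit the integrality of $k := (1-\lambda)/(\lambda\eprob)\in\naturals$ to collapse the recursion~(\ref{eq:XiRec}) into a scalar iteration on a Bernoulli parameter, and then to exhibit a bounded-support $\rho$ whose associated update map is strictly sub-linear on $(0,1]$ and polynomial of order $\ge 2$ at the origin.

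\textbf{Step 1 (Bernoulli reduction).} Since $\xi^{(0)}=1\in\{0,1\}$ and the shift $1-k$ appearing in~(\ref{eq:XiRec}) is an integer, an induction on $t$ shows $\xi^{(t)}\in\{0,1\}$ almost surely: indeed $\sum_{i=1}^{n-1}\xi_i^{(t)}+1-k$ is then an integer, and $[\,\cdot\,]_0^1$ maps integers into $\{0,1\}$. Moreover $\xi^{(t+1)}=1$ iff at least $k$ of the $n-1$ i.i.d.\ copies of $\xi^{(t)}$ are equal to $1$, so writing $p_t:=\prob\{\xi^{(t)}=1\}$ one obtains the scalar iteration
\[
p_{t+1}=g_\rho(p_t),\qquad g_\rho(p):=\sum_n\rho_n\,\prob\{\mathrm{Bin}(n-1,p)\ge k\},\qquad p_0=1,
\]
and $\prob\{\xi^{(t)}\ge\ve\}=p_t$ for every $\ve\in(0,1]$, so it suffices to bound $p_t$.

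\textbf{Step 2 (Rate accounting).} A direct calculation shows
$C(\lambda,\eprob)/((1-\lambda)(1-\eprob))=1-\lambda\eprob/(1-\lambda)=1-1/k$,
so the hypothesis $r<C(\lambda,\eprob)/((1-\lambda)(1-\eprob))$ is equivalent to $(1-r)/2>1/(2k)$. Using $\int_0^1\rho(x)\,\de x=1/\Rho'(1)$, the rate condition $1-2\int_0^1\rho(x)\,\de x\ge r$ is equivalent to $\Rho'(1)\ge 2/(1-r)$, which is in turn strictly less than $2k$. Note that $k\ge 2$ in the non-trivial regime, since $k=1$ forces $C(\lambda,\eprob)=0$ and makes the statement vacuous.

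\textbf{Step 3 (Construction of $\rho$).} The goal is a finitely supported $\rho$ such that $\int_0^1\rho(x)\,\de x\le (1-r)/2$ and $g_\rho(p)<p$ for every $p\in(0,1]$. Note that mass placed on any degree $n\le k$ contributes nothing to $g_\rho$ (since $\mathrm{Bin}(n-1,p)<k$ almost surely) but is necessary to break the trivial fixed point at $p=1$, because $g_\rho(1)=\sum_{n\ge k+1}\rho_n=1-\sum_{n\le k}\rho_n$. Conversely, placing the remaining mass on a single large degree $K$ creates a very sharp transition of $\prob\{\mathrm{Bin}(K-1,\cdot)\ge k\}$ whose graph generically crosses the identity line in the interior of $(0,1)$, producing an attracting positive fixed point. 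One therefore spreads the complementary mass over a carefully chosen finite set $\{k+1,\dots,n_{\max}\}$ so that (i) the rate constraint holds, (ii) the leading behaviour at the origin is $g_\rho(p)=C_k p^k(1+o(1))$ with $C_k>0$, and (iii) the curve $p\mapsto g_\rho(p)$ stays strictly below the identity throughout $(0,1]$. Since $g_\rho(p)/p$ is continuous, tends to $0$ as $p\to 0^+$ (as $k\ge 2$), and equals $1-\sum_{n\le k}\rho_n<1$ at $p=1$, requirement (iii) is a pointwise inequality on a compact interval, yielding a uniform gap $\eta:=1-\sup_{(0,1]}g_\rho(p)/p>0$.

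\textbf{Step 4 (Doubly exponential rate).} Fix any $p_\star\in(0,1)$ small enough that $g_\rho(p)\le Cp^k$ on $[0,p_\star]$ for some constant $C>0$ satisfying $Cp_\star^{k-1}<1$. The iteration then splits into two phases:
\begin{itemize}
\item \emph{Linear phase.} While $p_t\in[p_\star,1]$, one has $p_{t+1}\le (1-\eta)p_t$, so $p_{T_0}\le p_\star$ for some $T_0=O(\log(1/p_\star)/\eta)$.
\item \emph{Superlinear phase.} On $[0,p_\star]$, $p_{t+1}\le Cp_t^k$; the change of variables $y_t:=C^{1/(k-1)}p_t$ turns this into $y_{t+1}\le y_t^k$ with $y_{T_0}<1$, hence $y_{T_0+s}\le y_{T_0}^{k^s}\le \exp\{-A_0 k^s\}$ with $A_0:=-\log y_{T_0}>0$.
\end{itemize}
Re-indexing to absorb the finite delay $T_0$ yields $p_t\le \exp\{-A\gamma^t\}$ with $\gamma:=k\ge 2$ and $A:=A_0 k^{-T_0}>0$; since $\prob\{\xi^{(t)}\ge\ve\}=p_t$ for every $\ve\in(0,1]$, this proves~(\ref{equ:xitozero}).

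\textbf{Main obstacle.} Step~3 is the heart of the argument. The two competing requirements---minimizing rate overhead (small mass on low-degree checks) and preventing an interior fixed point of $g_\rho$ (spreading high-degree mass rather than concentrating it)---must be balanced against the rate bound $\int_0^1\rho(x)\,\de x\le (1-r)/2$, which becomes increasingly tight as $r\uparrow 1-1/k$. Simple two- or three-point $\rho$ either fail to saturate the rate or produce an attracting interior fixed point of $g_\rho$; establishing the uniform gap $g_\rho(p)/p\le 1-\eta$ on $(0,1]$ for an explicit, rate-close $\rho$ with bounded support is therefore the crux of the lemma.
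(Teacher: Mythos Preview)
Your reductions in Steps 1, 2 and 4 are correct and coincide with the paper's argument: the integrality of $k=(1-\lambda)/(\lambda\eprob)$ forces $\xi^{(t)}\in\{0,1\}$, giving a scalar recursion $p_{t+1}=g_\rho(p_t)$; the rate target becomes $\int_0^1\rho(x)\,\de x\le (1-r)/2$ with $(1-r)/2>1/(2k)$; and once $g_\rho(p)<p$ on $(0,1]$ with $g_\rho(p)=O(p^k)$ at the origin, the doubly exponential bound follows by the two-phase argument you wrote.

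The gap is Step~3, and you yourself flag it in the final paragraph: you describe the properties $\rho$ must satisfy but never construct such a $\rho$, nor do you argue why one exists. Your continuity argument only shows that \emph{if} $g_\rho(p)<p$ on $(0,1]$ then the gap is uniform; it does not show that the rate constraint and the no-interior-fixed-point constraint can be met simultaneously for $r$ arbitrarily close to $1-1/k$. As you note, a two- or three-point $\rho$ typically fails, and nothing in your outline explains how ``spreading the mass over $\{k+1,\dots,n_{\max}\}$'' should be done.

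The paper fills exactly this gap with an explicit construction. It introduces the infinite-support sequence
\[
\rho^*_{k,i}=\frac{k-1}{(i-1)(i-2)},\qquad i\ge k+1,
\]
and proves (Lemma~\ref{lem:capacityachievingrho}) the algebraic identity
\[
\sum_{i\ge k+1}\rho^*_{k,i}\,f_{k,i}(\alpha)=\alpha\quad\text{for all }\alpha\in[0,1],
\qquad \int_0^1\rho^*_k(x)\,\de x=\frac{1}{2k}.
\]
One then truncates at a level $b$, putting the leftover mass on degree $k$ (which is invisible to $g_\rho$ since $f_{k,k}\equiv 0$). The identity immediately yields
\[
g_\rho(\alpha)=\alpha-\sum_{i>b}\rho^*_{k,i}f_{k,i}(\alpha)<\alpha\quad\text{on }(0,1],
\]
so the strict sub-identity is automatic for \emph{every} truncation, while $\int_0^1\rho(x)\,\de x\to 1/(2k)$ as $b\to\infty$ handles the rate constraint. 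This is the missing idea: the existence problem is solved not by a compactness or perturbation argument but by an exact generating-function identity that makes the truncated family uniformly sub-linear by construction.
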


\begin{proof}[Proof of the Main Theorem \ref{thm:Main}]
Let $\lambda,\eprob, R$ be as in the statement of the theorem and
$r\in (R/((1-\lambda)(1-\eprob))$, $C(\lambda,\eprob)/((1-\lambda)(1-\eprob))$.
We claim that there exists a degree distribution $\Rho$ with 
support in $[0,n_{\rm max}]$,
with $1-2/\Rho'(1)\ge r$ (equivalently, from the edge perspective,
$1-2\int_0^1\rho(x)\,\de x\ge r$) such that the iterative decoder
achieves error probability smaller than $\pi$ in $O(\log\log(1/\pi))$
iterations. Let us check that this indeed proves the 
theorem. As mentioned above, the perturbation subspace 
$W$ (i.e., the linear subspace of $(\F_q)^m$ spanned by 
the rows of $\uz$) can be inferred with high probability
by the last $\eprob'\ell$ of the output $\uy$. This requires Gaussian elimination
of an $m\times(\ell\eprob')$ matrix 
with elements in $\F_q$, which can be accomplished at a cost of $O(N^3)$ operations.

The rest of the codeword $\ux$ is decoded by message passing. 
Each iteration requires updating $O(N)$ messages (because $\Rho$
has bounded support). Each update, cf. Eq.~(\ref{equ:messagepassin}),
requires finding a basis for a space spanned by, at most 
$(\ell\eprob)n_{\rm max}$ vectors in $(\F_q)^m$. This can be done, again 
via Gaussian elimination, in $O(N^3)$ operations. We thus get
$O(N^4)$ operations per iteration. Since running 
$O(\log\log(1/\pi))$ iterations achieves error probability smaller than
$\pi$, this implies the thesis.

Let us now prove this claim. 
First, we fix the degree distribution in such a way that Lemma
\ref{lemma:KeyLemma} holds for some $A>0$, $\gamma>1$.
We let $t_*(\pi)= O(\log\log(1/\pi))$  be such that 
$\prob\{\xi^{(t)}\ge \ve\}\le \exp\{-A\gamma^t\}\le \pi/(3n_{\rm max})$
for any $t\ge t_*(\pi)$. Then, for any fixed
$t\ge t_*(\pi)$ the decoded error
probability is upper bounded by $\pi$ if $N$ is large enough.

Indeed, $\ve$ can be chosen in such a way that Lemma \ref{lemma:ContinuousDE}
holds and therefore, for $m$ large enough,
$\prob\{D^{(t+1)}>0\}\le n_{\rm max}\, \pi/(3n_{\rm max}) +\pi/3\le 2\pi/3$.
The $i$-th row of codeword $\ux$ is decoded correctly if
any of the two messages $W^{(t+1)}_{i\to a}$ or $W^{(t+1)}_{i\to\bar{a}}$
has dimension $0$. Therefore, the symbol error probability
is upper bounded by $\prob\{\dim(W^{(t+1)})>0\}$.
By Lemma \ref{lemma:DiscreteDE}, for $\ell$ large enough, this is at most
$\prob\{D^{(t+1)}>0\}+\pi/3\le \pi$, which proves the theorem.
\end{proof}
%

\subsection{Proofs of Lemmas}\label{sec:ProofLemmas}

\begin{proof}[Proof of Lemma \ref{lemma:DiscreteDE}]
The proof is based on the `density evolution' technique \cite{RiU07},
and on some remarks that allow to simplify the resulting distributional
recursion. A similar result appeared already in the context of erasure decoding
for non-binary codes \cite{RaU05}: in order to be self-contained we nevertheless
sketch the proof here.

Let $\vec{e}$ be a uniformly random directed edge in $G$ and let
$W^{(t)}$ the associated message after $t$ iterations
of the message-passing algorithm. Denote by $\Ball(\vec{e},t)$
the `directed neighborhood' of $\vec{e}$ with radius $t$, i.e.,
the induced sub-graph containing all non-reversing walks in $G$ 
of length at most $t$ that terminate in $\vec{e}$. We regard this
as a labeled graph with variable node labels given by the received 
vectors and edge labels by the $m\to m$ matrices that define the code. 
It is well known that such a neighborhood
converges to a (labeled) Galton-Watson tree $\Tree(t)$.

More precisely, $\Tree(t)$ is a $t$-generations tree rooted in a directed edge
$\vec{e}_{\sf T}$ and with offspring distribution $\rho_n$.
We have
\begin{eqnarray}
|| \prob\{ \Ball(\vec{e},t)\in\,\cdot\,\}- \prob\{ \Tree(t)\in\,\cdot\,\}
||_{\sTV}\le \epsilon(\ell,t)\, ,
\end{eqnarray}
for some $\epsilon(\ell,t)$ as in the statement of Lemma~\ref{lemma:DiscreteDE}.

Note that the message $W^{(t)}$ is a function only of the neighborhood
$\Ball(\vec{e},t)$.  Suppose that we apply the message-passing algorithm
to $\Tree(t)$ and let $W^{(t)}_{\Tree}$ be the message passed
through the root edge after $t$ iterations. 
It follows from the definition of total variation distance that
\begin{eqnarray}
||\prob\{\dim(W^{(t)})\in\,\cdot\,\}-\prob\{\dim(W^{(t)}_{\Tree})
\in\,\cdot\,\}||_{\sTV}
\le \epsilon(\ell,t)\, .
\end{eqnarray}

The proof is completed by showing that $\dim(W^{(t)}_{\Tree})$
is distributed as the random variable $D^{(t)}$ defined recursively by
Eq.~(\ref{eq:DiscreteDE}). First, note that
$W^{(t)}_{\Tree}$ is a uniformly random subspace, conditional on its
dimension $\dim(W^{(t)}_{\Tree})$. This follows from the message-passing update
rule (\ref{equ:messagepassin}) together with the remark that, given any fixed 
subspace $W_*$ and a uniformly random full-rank $m\times m$ matrix
$\Lab$, $\Lab W_*$ is a uniformly random subspace with the same 
dimension as $W_*$. 

We prove that $\dim(W^{(t)}_{\Tree})$ is distributed 
as $D^{(t)}$ by recursion. The statement is true for 
$t=0$ by definition of our channel model. 
Consider the tree $\Tree(t+1)$ and condition on the offspring 
number at the root $n-1$.  Denote by $W^{(t)}_{\Tree,1},\dots,
W^{(t)}_{\Tree,n-1}$ the corresponding messages towards the root 
and condition on $\dim(W^{(t)}_{\Tree,1}) = d_1$,\dots,
$\dim(W^{(t)}_{\Tree,n-1})= d_{n-1}$. Then the distribution 
of $\dim(W^{(t+1)}_{\Tree})$ is given by the kernel
(\ref{eq:Kernel}) with $D=\ell\eprob$ by uniformity of the subspace.
The claim follows from the  fact that $W^{(t)}_{\Tree,1}$,\dots,
$W^{(t)}_{\Tree,n-1}$ are iid because of the tree structure.
\end{proof}

In the proof of Lemma~\ref{lemma:ContinuousDE} we require an estimate of
the probability that true density evolution deviates significantly
from the the rescaled density evolution.
\begin{propo}[Deviations from Asymptotic Density Evolution]\label{propoDimAlgebra}
Let $V_1$ be a subspace of dimension $d_1$ in $\F_q^m$,
and  $V_2$ a uniformly random subspace of dimension $d_2$.
Define $d_{1}\odot d_{2}\equiv \max(0,d_1+d_2-m)$,
and  $d_{1}\boxplus d_2\equiv \min(m,d_1+d_2)$. Then
\begin{eqnarray}
\prob\{d_{1}\odot d_2\le \dim(V_1\cap V_2) < d_{1}\odot d_2+k\}
\ge 1- q^{-k-\max(0,m-d_1-d_2)}\, ,\label{eq:InterDimensions}\\
\prob\{d_{1}\boxplus d_2-k\le \dim(V_1+ V_2) < d_{1}\boxplus d_2\}
\ge 1- q^{-k-\max(0,m-d_1-d_2)}\, .\label{eq:SumDimensions}
\end{eqnarray}
Further, let $V$ be a subspace of dimension $d$ and let
$V_1,\dots,V_{n-1}$ be uniformly random subspaces of dimensions 
(respectively) $d_!,\dots,d_{n-1}$ and
$d \equiv [d_1+\cdots+d_{n-1}+d-m]^{d}_0$.
Then 
\begin{eqnarray}
\prob\{|\dim((V_1+\cdots+V_{n-1})\cap V)-d|\ge k\}\le n\, q^{-k/n}\, .
\label{eq:BoundDim}
\end{eqnarray}
\end{propo}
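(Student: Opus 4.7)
The plan is to establish \eqref{eq:InterDimensions} by a union bound over subspaces, deduce \eqref{eq:SumDimensions} from the identity $\dim(V_1+V_2)+\dim(V_1\cap V_2)=d_1+d_2$, and then obtain \eqref{eq:BoundDim} by iterating the two-subspace estimates $n-1$ times.

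For \eqref{eq:InterDimensions}, the lower bound $\dim(V_1\cap V_2)\ge d_1\odot d_2$ is deterministic, being the combination of $\dim(V_1\cap V_2)\ge 0$ and $\dim(V_1\cap V_2)=d_1+d_2-\dim(V_1+V_2)\ge d_1+d_2-m$. For the upper tail at level $s=d_1\odot d_2+k$, I would count ordered pairs $(U,V_2)$ with $U\subseteq V_1\cap V_2$ and $\dim U=s$: since every $V_2$ with $\dim(V_1\cap V_2)\ge s$ is counted at least once, the union bound gives
\begin{align*}
\prob\{\dim(V_1\cap V_2)\ge s\}\le \frac{\binom{d_1}{s}_q\binom{m-s}{d_2-s}_q}{\binom{m}{d_2}_q}=\frac{\binom{d_1}{s}_q\binom{d_2}{s}_q}{\binom{m}{s}_q},
\end{align*}
where the equality is the $q$-analogue of the Vandermonde identity. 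Using the standard estimate $c_q\,q^{k(n-k)}\le \binom{n}{k}_q\le C_q\,q^{k(n-k)}$ with constants depending only on $q$, this ratio is at most a $q$-dependent constant times $q^{s(d_1+d_2-m-s)}$. A short case analysis on whether $d_1+d_2\le m$ or $d_1+d_2> m$ shows that at $s=d_1\odot d_2+k$ the exponent is at most $-k-\max(0,m-d_1-d_2)$, which yields \eqref{eq:InterDimensions}. Inequality \eqref{eq:SumDimensions} is then immediate from $\dim(V_1+V_2)=d_1+d_2-\dim(V_1\cap V_2)$ together with $d_1\boxplus d_2=d_1+d_2-d_1\odot d_2$.

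For \eqref{eq:BoundDim}, write $D=\dim V$, set $U_i=V_1+\cdots+V_i$ and $t_i=\min(m,d_1+\cdots+d_i)$, and let $\delta=k/n$. I would prove by induction on $i$ that $\dim U_i\ge t_i-(i-1)\delta$ with probability at least $1-(i-1)q^{-\delta}$. The inductive step applies \eqref{eq:SumDimensions} to $U_{i-1}$ and $V_i$ (which remains uniform on its Grassmannian conditional on $U_{i-1}$, since the $V_j$ are independent), combined with the 1-Lipschitz property $\min(m,a)\le \min(m,a-c)+c$ for $c\ge 0$ to carry the shortfall of $\dim U_{i-1}$ through the $\boxplus$. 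After $n-2$ such steps $\dim U_{n-1}$ is controlled. A final application of \eqref{eq:SumDimensions} to the pair $(U_{n-1},V)$, combined with the identity $\dim(U_{n-1}\cap V)=\dim U_{n-1}+D-\dim(U_{n-1}+V)$, pins $\dim((V_1+\cdots+V_{n-1})\cap V)$ within $(n-1)\delta<k$ of the typical value $d=[d_1+\cdots+d_{n-1}+D-m]_0^D$. The union bound over the $n-1$ probabilistic estimates gives total failure at most $(n-1)q^{-k/n}\le n\,q^{-k/n}$.

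The main technical obstacle is the case analysis in step one: one must check separately that $s(d_1+d_2-m-s)\le -k-(m-d_1-d_2)$ when $d_1+d_2\le m$ (the \emph{transverse} regime, where the extra surplus $m-d_1-d_2$ appears because $V_1$ and $V_2$ are very unlikely to intersect nontrivially) and that $s(d_1+d_2-m-s)\le -k$ when $d_1+d_2>m$ (where the intersection is already forced to have dimension at least $d_1+d_2-m$ and only the excess is random). A minor bookkeeping issue in the iteration is that both $\max(0,\cdot)$ and $\min(m,\cdot)$ are 1-Lipschitz, so deviations of $\dim U_{n-1}$ from $t_{n-1}$ translate into at most equal deviations of the typical intersection dimension $d$, and no additional slack is incurred.
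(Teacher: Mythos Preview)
Your overall architecture matches the paper's: prove the two-subspace estimate \eqref{eq:InterDimensions}, transfer it to \eqref{eq:SumDimensions} via $\dim(V_1+V_2)=d_1+d_2-\dim(V_1\cap V_2)$, and then iterate with a union bound to get \eqref{eq:BoundDim}. The recursive construction $U_i=U_{i-1}+V_i$ and the Lipschitz bookkeeping you describe are exactly what the paper does (it writes $W_i$ for your $U_i$ and also treats the final intersection with $V$ as one more step of the same kind, for a total of $n$ applications).

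The one real difference is in how \eqref{eq:InterDimensions} is established. The paper does not touch Gaussian binomials at all: it applies Markov's inequality to the \emph{cardinality} $|V_1\cap V_2|$. The event $\dim(V_1\cap V_2)\ge d_1\odot d_2+k$ is the same as $|V_1\cap V_2|\ge q^{d_1\odot d_2+k}$, and by linearity $\bE\,|V_1\cap V_2|=\sum_{v\in V_1}\prob\{v\in V_2\}=q^{d_1}\cdot q^{d_2-m}$, so Markov gives
\[
\prob\{\dim(V_1\cap V_2)\ge d_1\odot d_2+k\}\le q^{-k-d_1\odot d_2}\, q^{d_1+d_2-m}=q^{-k-\max(0,m-d_1-d_2)}
\]
in one line, with no case split. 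Your subspace union bound is the same idea one level up (counting $s$-dimensional witnesses rather than single vectors), and it does go through, but it is more labor and, as you yourself concede, produces ``a $q$-dependent constant times'' the target. At $k=1$ in the transverse regime $d_1+d_2\le m$ your exponent $s(d_1+d_2-m-s)$ equals $-1-(m-d_1-d_2)$ on the nose, so there is no slack to absorb that constant and the inequality \emph{as stated} is not quite recovered by your route. The vector-level Markov trick sidesteps this entirely and is the device worth remembering here.
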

\begin{proof}
Notice that Eq.~(\ref{eq:SumDimensions}) follows from 
Eq.~(\ref{eq:InterDimensions}) together with the identity
$\dim(V_1+V_2) = d_1+d_2-\dim(V_1\cap V_2)$. Further 
$\dim(V_1\cap V_2)\ge d_1 \odot d_2$ for any two subspaces
$V_1$, $V_2$ of the given dimensions. 

We are left with the task of
bounding the probability of $\dim(V_1\cap V_2)\ge d_1 \odot d_2+k$.
Notice that this event is identical
to $|V_1\cap V_2|\ge  q^{d_1 \odot d_2+k}$
(we denote by $|S|$ the cardinality of the set $S$). By the Markov inequality 
we have
\begin{eqnarray}
\prob\{\dim(V_1\cap V_2)\ge d_1 \odot d_2+k\}\le q^{-k-d_1 \dot d_2}
\E|V_1\cap V_2| =  q^{-k-d_1 \odot d_2}\, q^{d_1+d_2-m},
\end{eqnarray}
where the equality on the right-hand side follows by multiplying the number
of vectors in $V_1$ (that is $q^{d_1}$) with the probability that 
one of them belongs to $V_2$ (by uniformity this is $q^{-m+d_2}$).

Eq.~(\ref{eq:BoundDim}) follows by applying the previous bound recursively.
bounds. Explicitly, we define $W_1= V_1$, $W_2=W_1+ V_2$,
\dots, $W_{n-1}= W_{n-2}+V_{n-1}$, and $W_{n}=W_{n-1}\cap V$. 
The corresponding (typical) dimensions are $c_1=d_1$,
$c_2= c_1\boxplus d_2, \dots, c_{n-1}=c_{n-2}\boxplus d_{n-1}$,
$c_n = c_{n-1}\odot d_n = d$. By the union bound, with probability 
at least $1-n \, q^{-k/n}$ we have 
$|\dim(W_n)-(d_n\odot \dim(W_{n-1}))|\le k/n$ and 
$|\dim(W_{i})-(d_{i}\boxplus \dim(W_{i-1})|\le k/n$
for $i\in\{2,\dots,n-1\}$. The thesis follows by the triangle inequality.
\end{proof}

\begin{proof}[Proof of Lemma \ref{lemma:ContinuousDE}]
We will first prove that there exists a coupling between $D^{(t)}$
and $\xi^{(t)}$ such that
$|D^{(t)}-(\ell\eprob) \xi^{(t)}|\le \ell \ve $ with high probability 
as $\ell$, $m\to\infty$ (with $\lambda$, $\eprob$ fixed). 
Subsequently, we shall prove that this claim implies the thesis.

The coupling is constructed recursively. For $t=0$ we have $D^{(0)}=
(\ell \eprob) \xi^{(t)}=\ell\eprob$ deterministically. This defines
the coupling of $D^{(t)}$ and $\xi^{(t)}$ for $t=0$. 
Assume we have shown how to construct a coupling of $D^{(t)}$
and $\xi^{(t)}$ for some $t \in \naturals$. To define the coupling for $t+1$
we  draw an integer $n$ with distribution $\rho_n$. We then generate $n-1$ coupled pairs
($D^{(t-1)}_i, \xi^{(t-1)}_i)$. From those we generate a coupled pair
($D^{(t)}_i, \xi^{(t)}_i)$ via the recursions (\ref{eq:DiscreteDE}) and
(\ref{eq:XiRec}), respectively. 

In order to prove the claim it is sufficient to show the following.
If $V_1,\dots, V_{n-1}$ are uniformly random subspaces of dimensions
$(\ell\eprob)\xi_1,\dots, (\ell\eprob)\xi_{n-1}$ in
$\F_q^m$, and if $V$ has dimension $(\ell\eprob)$, then, with high probability,
$|\dim((V_1+\cdots+V_{n-1})\cap V)-(\ell\eprob)\xi|\le \ell\ve$
for any $\ve>0$. This in turns follows from Proposition 
\ref{propoDimAlgebra} (Eq.~(\ref{eq:BoundDim})) together with the 
observation that the degree $n$ is bounded.

Let us now consider the thesis of the lemma, Eq.~(\ref{eq:ContinuousDE}).
We can assume without loss of generality that $n_{\rm max}\ge 1$
and $m>\ell\eprob$, whence $1-\lambda>\lambda\eprob$ follows.
Let $n_{\rm max}\ge 2$ be the largest integer in the support of $\rho_n$
and take $\ve>0$ small enough so that $2(n_{\rm max}-1)\ve
\le (1-\lambda)/(\lambda\eprob)-1-\gamma$ for some $\gamma>0$.
Draw $n_{\rm max}$ iid copies of $D^{(t)}$, denoted
$D^{(t)}_1,\dots, D^{(t)}_{n_{\rm max}}$. Since under the coupling  
$|D^{(t)}-(\ell\eprob) \xi^{(t)}|\le \ell \ve $ with high probability,
\begin{eqnarray}
\prob\left\{\max\{D^{(t)}_1,\dots, D^{(t)}_{n_{\rm max}}\}\ge 2\, \ve \,
(\ell\eprob)\right\}\leq n_{\rm max}\prob\{\xi^{(t)}\ge \ve\}+o_m(1)\, .
\label{eq:UnionBoundSpaces}
\end{eqnarray}
Now draw $n$ with distribution $\rho_n$ and $D^{(t+1)}$ conditional
on $D^{(t)}_1,\dots,D^{(t)}_{n-1}$ according to the kernel 
(\ref{eq:Kernel}). Namely, $D^{(t+1)}$ is the dimension of
$V\cap(V_1+\dots +V_{n-1})$ when $\dim(V)=\ell\eprob$ and 
$V_1,\dots,V_{n-1}$ are uniformly random subspaces of $\F_q^m$
with dimensions $D^{(t)}_1,\dots,D^{(t)}_{n-1}$.

Let $W\equiv V_1+\dots +V_{n-1}$. Then $W$ is uniformly random conditioned
on its dimension
$\dim(W) \le D^{(t)}_1+\cdots+D^{(t)}_{n-1}\le 
\ell(1-\lambda-\lambda\eprob)/\lambda-\ell\eprob\gamma$
with probability lower bounded as in Eq.~(\ref{eq:UnionBoundSpaces}).
Assume this to be the case.
By Proposition \ref{propoDimAlgebra}, Eq.~(\ref{eq:InterDimensions}),
and recalling that $m=\ell(1-\lambda)/\lambda$, 
the probability that $D^{(t+1)}=\dim(V\cap W)>0$ is at most
$q^{-\ell\gamma\eprob}$. This proves the thesis.
\end{proof}

In order to prove our last auxiliary result, Lemma \ref{lemma:KeyLemma},
we need some algebraic properties of the edge-perspective capacity-achieving
degree distribution and of the corresponding generating function:
\begin{eqnarray}
\rho^*_k(x) = \sum_{i = k+1}^{\infty} \frac{k-1}{(i-1)(i-2)}\, x^{i-1}
\equiv\sum_{i=0}^{\infty}\rho^*_{k,i}\, x^{i-1}\, .
\end{eqnarray}

\begin{lemma}[Basic Properties of Capacity-Achieving Degree Distribution]
\label{lem:capacityachievingrho}
Let $k \in \naturals$ and define  
$f_{k, i}(\alpha) = \sum_{j =k}^{i-1} \binom{i-1}{j} \alpha^j 
(1-\alpha)^{i-1-j}$.
Then $\rho^*_k(1)=1$, $\de \rho^*_k(x)/\de x\mid_{x=1} \geq k$, 
$\int_0^1 \! \rho^*_k(x) \de x = 1/(2 k)$,  
and $\sum_{i} \rho^*_{k, i} f_{k, i}(\alpha) =\alpha$.
\end{lemma}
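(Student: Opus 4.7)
The plan is to verify the four claimed identities by direct term-by-term manipulation of the series defining $\rho^*_k$, exploiting partial fractions for the polynomial pieces and a generating-function argument for the binomial identity.

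For the first claim, I would use the partial fraction $\frac{1}{(i-1)(i-2)}=\frac{1}{i-2}-\frac{1}{i-1}$, so that the series defining $\rho^*_k(1)$ telescopes from $i=k+1$ to $\infty$ with value $\frac{1}{k-1}$; multiplying by $k-1$ gives $1$. For the second claim, differentiating term by term yields $\sum_{i\ge k+1}\frac{k-1}{i-2}$, which is a tail of the harmonic series; one cuts the sum at any $N$ large enough so that the partial sum exceeds $k$, establishing the inequality (indeed the derivative is formally $+\infty$, which trivially bounds any finite $k$). For the third claim, integrating term by term gives $\sum_{i\ge k+1}\frac{k-1}{i(i-1)(i-2)}$. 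Here I would use the partial-fraction decomposition
\begin{equation*}
\frac{2}{i(i-1)(i-2)}=\Bigl(\frac{1}{i-2}-\frac{1}{i-1}\Bigr)-\Bigl(\frac{1}{i-1}-\frac{1}{i}\Bigr),
\end{equation*}
so that the sum splits into two telescoping series with limits $\frac{1}{k-1}$ and $\frac{1}{k}$ respectively; assembling the pieces yields $\frac{k-1}{2}\bigl(\frac{1}{k-1}-\frac{1}{k}\bigr)=\frac{1}{2k}$.

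The fourth identity is the main obstacle and where most care is required. I would differentiate both sides in $\alpha$ and match the derivatives, using the classical fact (a telescoping cancellation in the Bernstein polynomial expansion, equivalently the beta–binomial derivative identity) that
\begin{equation*}
\frac{d}{d\alpha}f_{k,i}(\alpha)=k\binom{i-1}{k}\alpha^{k-1}(1-\alpha)^{i-1-k}.
\end{equation*}
Substituting and simplifying the coefficient $\frac{(k-1)k}{(i-1)(i-2)}\binom{i-1}{k}=\frac{(i-3)!}{(k-2)!(i-1-k)!}$, I would set $n=i-1-k$ to rewrite the series as $\alpha^{k-1}\sum_{n\ge 0}\binom{n+k-2}{n}(1-\alpha)^n$ and invoke the generating function $\sum_{n\ge 0}\binom{n+k-2}{n}x^n=(1-x)^{-(k-1)}$ with $x=1-\alpha$. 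This collapses the sum to $\alpha^{k-1}\cdot\alpha^{-(k-1)}=1$. Since $f_{k,i}(0)=0$ for every $i$, both sides of the claimed identity vanish at $\alpha=0$, so integrating the derivative equality from $0$ to $\alpha$ gives the result.

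The main difficulty is the fourth identity: it requires both the combinatorial derivative of the Binomial tail and recognizing the resulting power series. The other three identities reduce to telescoping and are essentially bookkeeping. A minor issue worth flagging is that $\rho^*_k$ has unbounded support, so the second claim is only meaningful in the formal/limiting sense; in the application to Lemma \ref{lemma:KeyLemma} one truncates $\rho^*_k$ at a large finite degree and incurs a vanishing penalty in the other three equalities, which follows easily from the explicit telescoping remainders computed above.
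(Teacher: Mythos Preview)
Your argument is correct. For the first three claims your telescoping/partial-fraction manipulations coincide with the paper's (the paper proves the derivative bound slightly differently, observing that all exponents in $\rho^*_k(x)$ are at least $k$ and the coefficients are nonnegative summing to $1$, whence $\sum_i(i-1)\rho^*_{k,i}\ge k$; your observation that the sum actually diverges is of course also valid).

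For the fourth identity you take a genuinely different route from the paper. The paper swaps the order of summation $\sum_i\sum_j\to\sum_j\sum_i$, evaluates the inner sum over $i$ via the identity $\sum_{n\ge j}\binom{n}{j}x^n = x^j/(1-x)^{j+1}$ (integrated twice), and then telescopes the remaining sum over $j$. You instead differentiate in $\alpha$, which collapses the $j$-sum to a single term through the incomplete-beta derivative identity $\frac{d}{d\alpha}f_{k,i}(\alpha)=k\binom{i-1}{k}\alpha^{k-1}(1-\alpha)^{i-1-k}$, and then sum over $i$ using the negative-binomial series; integrating back recovers the claim. Your approach is arguably cleaner in that it avoids the double-sum interchange and the final $j$-telescoping, at the modest cost of invoking the (standard) binomial-tail derivative identity. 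The paper's approach, on the other hand, is entirely self-contained once one accepts the single generating-function identity~(\ref{eq:Summation}).
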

\begin{proof} 
By a reordering of the terms in the sum,
\begin{align*} 
\rho^*_k(1)=\lim_{j\to\infty}
\sum_{i = k+1}^{k+j} \frac{k-1}{(i-1)(i-2)} = \lim_{j\to\infty}
\sum_{i=k+1}^{k+j}
\left(\frac{k-1}{i-2}-\frac{k-1}{i-1}\right) = \lim_{j\to\infty}
\left(1-\frac{k-1}{k+j-1}\right)=1\,.
\end{align*}

In a similar  manner, we have
\begin{align*}
\int_0^1 \! \rho^*_k(x) \de x =
\lim_{j\to\infty}\sum_{i \geq k+1}^{k+j} \frac{k-1}{ i (i-1)(i-2)} = 
\lim_{j\to\infty}\sum_{i \geq k+1}^{k+j}\left( \frac{k-1}{2(i-1)(i-2)}- 
\frac{k-1}{2i(i-1)}\right) = \frac{1}{2k}\, .
\end{align*}
The claim $\int_0^1 \! \rho^*_k(x) \de x = 1/(2 k)$ follows since $\rho^*_k(x)=1$, 
$\rho_{k, i}^* \geq 0$,  and
since $\rho(x)$ only contains powers of $x$ of at least $k$.
In order to prove the last assertion we recall the identity \cite{Wil94}
\begin{eqnarray}
\sum_{n=i}^{\infty}\binom{n}{i} x^n = \frac{x^i}{(1-x)^{i+1}}\, .
\label{eq:Summation}
\end{eqnarray}
We then obtain
(here $\bar{\alpha}\equiv (1-\alpha)$): 
\begin{align*}
\sum_{i} \rho_{k, i}^* f_{k, i}(\alpha) 
= & \sum_{i\geq k+1}   \frac{k-1}{(i-1)(i-2)} \sum_{j=k}^{i-1} \binom{i-1}{j} \alpha^j \bar{\alpha}^{i-1-j} 
=  (k-1) \sum_{j= k}^{\infty} \left(\frac{\alpha}{\bar{\alpha}} \right)^j \sum_{i=j+1}^{\infty}  
\frac{\binom{i-1}{j} \bar{\alpha}^{i-1}}{(i-1)(i-2)} \\
= & (k-1) \sum_{j\geq k}^{\infty} \left(\frac{\alpha}{\bar{\alpha}} \right)^j 
\frac{\alpha^{1-j} \bar{\alpha}^j}{j (j-1)}
=  (k-1) \sum_{j= k}^{\infty}  \frac{\alpha}{j (j-1)} = \alpha\, ,
\end{align*}
where we applied the identity obtained by integrating 
Eq.~(\ref{eq:Summation}) twice with respect to $x$. 
\end{proof}

\begin{proof}[Proof of Lemma \ref{lemma:KeyLemma}]
Let $k= (1-\lambda)/(\lambda \eprob)$, $k \in \naturals$.
Then $C(\lambda,\eprob)/((1-\eprob)(1-\lambda))= 1-1/k$.

It is clear from the recursive definition (\ref{eq:XiRec})
together with the initial condition $\xi^{(0)}=1$ that, for any
$t\ge 0$, $\xi^{(t)}$ only takes values
$0$ and $1$. Let $\alpha_t\equiv \prob\{\xi^{(t)}=1\}$.
Then $\alpha_0=1$, and  Eq.~(\ref{eq:XiRec}) implies that
\begin{eqnarray}
\alpha_{t+1} = \sum_{n=k+1}^{\infty}\rho_{n}\, f_{k,n}(\alpha_t) \equiv 
\Fu_{k,\rho}(\alpha_t), 
\end{eqnarray}
where $f_{k,n}(\alpha)$ is defined as in the statement of Lemma
\ref{lem:capacityachievingrho} (note that $f_{k,k}(\alpha)\equiv 0$). 
We claim that for any 
$r<1-1/k$ there exists an edge-perspective degree distribution $\rho$ of bounded support such that:
(i) $1-2\int_0^1\!\rho(x)\,\de x\ge r$; (ii) $\Fu_{k,\rho}(\alpha)<\alpha$
for any $\alpha\in (0,1]$; (iii) $\Fu_{k,\rho}(\alpha)=O(\alpha^k)$
as $\alpha\downarrow 0$. Then the lemma
follows by standard calculus, with $\gamma\in (1,k)$ and $A$ sufficiently small.

In order to exhibit such a degree distribution, fix 
$b \in \naturals$, $b \geq k$, and define $\rho(x) = 
\sum_{i = k} \rho_{i} x^{i-1}$, where $\rho_i=0$ except for
$\rho_{i}=\rho^*_{k, i}$, $i = k+1, \dots, b$, and $\rho_{k}=1-\sum_{i=k+1}^b \rho^*_{k, i}$. Then
\begin{eqnarray}
\int_0^{1}\rho(x)\, \de x=\sum_{i=k}^{b}\rho_i/i= 
\sum_{i=k}^{b}\rho_{k,i}^*/i+\sum_{i=b+1}^{\infty}\rho_{k,i}^*/k\, .
\end{eqnarray}
By Lemma \ref{lem:capacityachievingrho} the right-hand side
converges to $1/(2k)$ as $b\to\infty$. Therefore we can chose $b$ large 
enough so that claim (i) above is fulfilled.

Consider now claim (ii). We write
\begin{align*}
\Fu_{k,\rho}(\alpha)=
\sum_{i=k+1}^{b} \rho_{i} f_{k, i}(\alpha) = \sum_{i=k+1}^{\infty} 
\rho_{k, i}^* f_{k, i}(\alpha)- \sum_{i=b+1}^{\infty} 
\rho_{k, i}^* f_{k, i}(\alpha)= \alpha - \sum_{i=b+1}^{\infty} 
\rho_{k, i}^* f_{k, i}(\alpha),
\end{align*}
where the last identity follows from Lemma~\ref{lem:capacityachievingrho}.
The claim is implied by the remark that $f_{k,i}(\alpha)>0$ 
for $i\ge k+1$ and $\alpha\in (0,1]$.

Finally, claim (iii) is a consequence of the fact that
$f_{k,i}(\alpha) = \binom{i-1}{k}\alpha^k+O(\alpha^{k+1})$ together
with $i\le b$.
\end{proof}

%
%
\subsection{Capacity}\label{sec:Capacity}

\begin{proof}[Proof of Proposition \ref{pro:capacityforfixedparameters}]
By standard information-theoretic arguments \cite{CoT91}, 
the channel information capacity is given by
\begin{eqnarray}
C(\eprob,\lambda) = \lim_{N\to\infty, \, \ell = N\lambda}\frac{1}{N\ell}\, 
\sup_{\prob_{\uX}}\, I(\uX;\uY)\, .
\end{eqnarray}
Here $I(\uX;\uY) = \sum_{\ux,\uy}\prob_{\uX,\uY}(\ux,\uy)
\log\{\prob_{\uX,\uY}(\ux,\uy)/\prob_{\uX}(\ux)\prob_{\uY}(\uy)\}$
is the \emph{mutual information} between $\uX$ and $\uY$ and 
the supremum is taken over all possible input distributions.

Writing the mutual information in terms of entropy and conditional entropy,
and using our channel model (\ref{equ:channelmodel}), we have 
$I(\uX;\uY) = H(\uY)-H(\uY|\uX) = H(\uY)-H(\uZ)$. Since $H(\uZ)$
does not depend on the input distribution, the mutual information is maximized 
when the latter is uniform. This implies that the output 
is uniform as well, and we get $H(\uY) = \log(q^{m\ell})$.

Finally, $H(\uZ)$ is the logarithm of the number $A(s,\ell,m)$ 
of $\ell\times m$ matrices of rank $\rank(\uZ)=\ell\eprob\equiv s$.
We have $A(s,\ell,m) = q^{m\ell}\prob_0\{\rank(\uZ)=s\}$
where $\prob_0$ denotes probability with respect to a uniformly 
random matrix $\uZ$. Assume without loss of
generality that $\ell,m\ge s$.  
If $z_1,\dots, z_{\ell}$ be the lines of $\uZ$,
then the first $s$ lines are independent with probability 
$(1-q^{-\ell})(1-q^{-\ell+1})\cdots(1-q^{-\ell+s})\ge 1-sq^{-\ell+s}$. 
then the space .  
\begin{eqnarray}
A(s,\ell,m) \ge 
q^{m\ell}\, \prob\{z_{s+1}\dots z_{\ell}\in\span(z_1\dots z_s),
\, \rank(z_1\dots z_s)=s\}\ge q^{m\ell}\, q^{-(\ell-s)(m-s)}(1-sq^{-\ell+s})\, .
\end{eqnarray}
On the other hand  $\prob_0\{\rank(\uZ)=s\}$ is upper bounded by 
summing over all subsets of $s$ lines (there are $\binom{\ell}{s}\le 2^{\ell}$
such subsets), the probability that such lines are independent 
and that the other lines are in the span generated by these. Such an upper
bound is at most $2^{\ell}$ larger than the above lower bound.
By taking $N\to\infty$ with $\ell = N\lambda=N-m$, $\lambda\in (0,1)$
and $\eprob\in(0,\min(1,(1-\lambda)/\lambda$ we get
\begin{eqnarray}
H(\uZ) =\log A(s,\ell,m) = N\ell\, (\eprob+\eprob^2\lambda)+O(N)\, .
\end{eqnarray}
Therefore $I(\uX;\uY)=H(\uY)-H(\uZ) = N\ell(1-\lambda-\eprob+\eprob^2\lambda)
+O(N)$ whence the thesis follows.
\end{proof}

\bibliographystyle{siam} 

\begin{thebibliography}{10}

\bibitem{ACLY00}
{\sc R.~Ahlswede, N.~Cai, S.-Y.~R. Li, and R.~W. Yeung}, {\em Network
  information flow}, IEEE Trans. Inform. Theory, 46 (2000), pp.~1204--1216.

\bibitem{CWJ03}
{\sc P.~A. Chou, Y.~Wu, and K.~Jain}, {\em Practial network coding}, in Proc.
  of the Allerton Conf. on Commun., Control, and Computing, Monticello, IL,
  USA, 2003.

\bibitem{CoT91}
{\sc T.~M. Cover and J.~A. Thomas}, {\em Elements of Information Theory},
  Wiley, New York, NY, USA, 1991.

\bibitem{FrS07}
{\sc C.~Fragouli and E.~Soljanin}, {\em Network Coding Fundamentals}, vol.~2 of
  Foundations and Trends in Networking, NOW, Delft, Holland, 2007.

\bibitem{Gal63}
{\sc R.~G. Gallager}, {\em Low-Density Parity-Check Codes}, MIT Press,
  Cambridge, MA, USA, 1963.

\bibitem{HKMKE03}
{\sc T.~Ho, R.~K{\"o}tter, M.~Medard, D.~R. Karger, and M.~Effros}, {\em The
  benefits of coding over routing in a randomized setting}, in Proc. of the
  IEEE Int. Symposium on Inform. Theory, Yokohama, Japan, 2003, p.~442.

\bibitem{KoK07}
{\sc R.~K{\"o}tter and F.~R. Kschischang}, {\em Coding for errors and erasures
  in random network coding}.
\newblock Submitted, Nov. 2007.

\bibitem{KoM04}
{\sc R.~K{\"o}tter and M.~Medard}, {\em An algebraic approach to network
  coding}.
\newblock Submitted, Feb. 2004.

\bibitem{LYC03}
{\sc S.-Y.~R. Li, R.~W. Yeung, and N.~Cai}, {\em Linear network coding}, IEEE
  Trans. Inform. Theory, 49 (2003), pp.~371--381.

\bibitem{LMSSS97}
{\sc M.~Luby, M.~Mitzenmacher, A.~Shokrollahi, D.~A. Spielman, and V.~Stemann},
  {\em Practical loss-resilient codes}, in Proc. of the 29th annual ACM
  Symposium on Theory of Computing, 1997, pp.~150--159.

\bibitem{RaU05}
{\sc V.~Rathi and R.~Urbanke}, {\em Density evolution, threshold and the
  stability condition for non-binary {LDPC} codes}, IEE Proc. Commun., 152
  (2005), pp.~1069--1074.

\bibitem{RiU00encoding}
{\sc T.~Richardson and R.~Urbanke}, {\em Efficient encoding of low-density
  parity-check codes}, IEEE Trans. Inform. Theory, 47 (2001), pp.~638--656.

\bibitem{RiU07}
\leavevmode\vrule height 2pt depth -1.6pt width 23pt, {\em Modern Coding
  Theory}, Cambridge University Press, 2007.
\newblock In preparation.

\bibitem{SKK07}
{\sc D.~Silva, R.~K{\"o}tter, and F.~R. Kschischang}, {\em A rank-metric
  approach to error control in random network coding}.
\newblock Submitted, Nov. 2007.

\bibitem{Wil94}
{\sc H.~S. Wilf}, {\em Generatingfunctionology}, Academic Press, 2~ed., 1994.

\bibitem{YLCZ05b}
{\sc R.~W. Yeung, S.-Y.~R. Li, N.~Cai, and Z.~Zhang}, {\em Network Coding
  Theory: Multiple Sources}, vol.~2 of Foundations and Trends in Communications
  and Information Theory, NOW, Delft, Holland, 2005.

\bibitem{YLCZ05a}
\leavevmode\vrule height 2pt depth -1.6pt width 23pt, {\em Network Coding
  Theory: Single Sources}, vol.~2 of Foundations and Trends in Communications
  and Information Theory, NOW, Delft, Holland, 2005.

\end{thebibliography}
\newcommand{\SortNoop}[1]{}

\appendix
\end{document}